\def\d{{\rm d}}
\def\e{{\rm e}}
\def\la{\lambda}
\def\a{\alpha}
\def\b{\beta}
\def\ga{\gamma}
\def\Ph{\Phi_{\la}}
\def\ph{\varphi}
\def\e{\varepsilon}
\def\ep{\varepsilon}
\def\th{\vartheta}
\def\W{\mathcal{W}}
\def\N{\mathbb{N}}
\def\R{\mathbb{R}}
\def\Z{\mathbb{Z}}
\newcommand{\deriv}[3][]{\frac{\d^{#1}{#2}}{{\d{#3}}^{#1}}}
\def\PI{\hbox{\rm P$_{\rm I}$}}
\def\PII{\hbox{\rm P$_{\rm II}$}}
\def\PIV{\hbox{\rm P$_{\rm IV}$}}
\def\dPI{\hbox{\rm dP$_{\rm I}$}}
\newcommand{\WhitD}[1]{D_{#1}}
\newtheorem{theorem}{Theorem}[section]
\newtheorem{lemma}[theorem]{Lemma}
\newtheorem{corollary}[theorem]{Corollary}
\theoremstyle{definition}
\newtheorem{remark}[theorem]{Remark}
\newtheorem{remarks}[theorem]{Remarks}
\def\ds{\displaystyle}
\def\erf{\mathop{\rm erf}\nolimits}
\numberwithin{figure}{section}
\numberwithin{equation}{section}
\numberwithin{table}{section}
\def\ww#1{w(#1;t)}
\def\p{Painlev\'{e}}
\def\peq{Painlev\'{e} equation}
\def\peqs{Painlev\'{e} equations}
\def\bts{B\"ack\-lund transformations}
 \def\O{\mathcal{O}}
\def\SS{\widetilde{S}}
\def\QQ{\widetilde{Q}}
\def\bb{\widetilde{\b}}
\def\hb{\widehat{\b}}
\def\comment#1{}
\begin{document}
\title{Properties of Generalized Freud Polynomials}

\author{Peter A. Clarkson$^{1}$ and Kerstin Jordaan$^{2}$\\[2.5pt]
$^{1}$ School of Mathematics, Statistics and Actuarial Science,\\ University of Kent, Canterbury, CT2 7FS, UK\\ {\tt P.A.Clarkson@kent.ac.uk}\\[2.5pt]
$^{2}$ Department of Decision Sciences,\\
University of South Africa, Pretoria, 0003, South Africa\\ 
{\tt jordakh@unisa.ac.za}}

\maketitle
\begin{abstract}
\noindent We consider the semiclassical generalized Freud weight function 
\[w_{\lambda}(x;t) = |x|^{2\lambda+1}\exp(-x^4 +tx^2),\qquad \lambda>-1,\quad x\in\mathbb{R}.\] We analyse the asymptotic behaviour of the sequences of monic polynomials that are orthogonal with respect to $w_{\lambda}(x;t)$, as well as the asymptotic behaviour of the recurrence coefficient, when the degree, or alternatively, the parameter $t$, tend to infinity. We also investigate existence and uniqueness of positive solutions of the nonlinear discrete equation satisfied by the recurrence coefficients and prove properties of the zeros of the generalized Freud polynomials.
\end{abstract}

 \section{Introduction}
The study of polynomials orthogonal on unbounded intervals with respect to general exponential-type weights\break $\exp\{-Q(x)\}$, with $Q(x)$ a polynomial of the form $Q(x)=|x|^{\alpha}$, with $\alpha\in \N$, began with G\'{e}za Freud in the 1960's (for details see \cite{refFreud1971,refFreud1976,refNevai1984a,refNevai1986} as well as the monographs by Levin and Lubinsky \cite{refLevinLubinsky01} and Mhaskar \cite{refMhaskar}). Earlier Freud \cite{refFreud1976,refFreud1986} investigated the asymptotic behaviour of the recurrence coefficients for special classes of weights by a technique giving rise to an infinite system of nonlinear equations called Freud equations for these coefficients, cf.~\cite{refMagnus86,refMagnus95}. If the monic orthogonal polynomials $\{ p_n(x) \}_{n=0}^{\infty}$ satisfying the three-term recurrence relation
\begin{equation}\label{orthogonalrecur}
p_{n+1}(x)= xp_{n}(x)-\b_n p_{n-1}(x), 
\end{equation} with $p_{-1}(x)=0$ and $p_0(x)=1$,
are related to the weight $w(x)=\exp(-x^4)$ on the whole real line, then the Freud equations are reduced to
(cf.~\cite{refBauldryMN88,refFreud1986,refKMcLaughlin,refMagnus85,refNevai1973,refNevai1983,refNevai1986})
\begin{subequations}\label{nonlinear}
\begin{equation}\label{nonlinear1}
4\b_{n}\left( \b_{n-1} + \b_{n} + \b_{n+1}\right) =n, 
\end{equation} with initial conditions
\begin{equation}\label{nonlinear2} 
\b_0=0,\quad \b_{1} = \frac{\int_{-\infty}^\infty x^2 \exp(-x^4)\, dx}{\int_{-\infty}^\infty \exp(-x^4)\, dx}
= \frac{\Gamma(\frac{3}{4})}{\Gamma(\frac{1}{4})}.
\end{equation} \end{subequations}
We remark that equation \eqref{nonlinear1} was first derived by Shohat \cite[equation (39), p.\ 407]{refShohat39}.
Nevai \cite{refNevai1983} proved that there is a unique positive solution to the problem \eqref{nonlinear}.

Freud \cite{refFreud1976}, via the Freud equations, conjectured that the asymptotic behaviour of recurrence coefficients $\b_n$ in the recurrence relation \eqref{orthogonalrecur} satisfied by the polynomials $\{ p_n (x) \}_{n=0}^{\infty}$ orthogonal with respect to the weight
 \begin{equation}\label{conj}w(x) = |x|^{\la}\exp(-|x|^{m}), \qquad m\in \N,\end{equation}
 with $\la>-1$, could be described by
\begin{equation}\label{Freudconj}
\ds \lim_{n\rightarrow \infty} \frac{\b_n}{n^{2/m}}= \left[ \frac{\Gamma(\tfrac{1}{2}m) \Gamma(1+\tfrac{1}{2}m)}{\Gamma(m+1)}\right]^{2/m}.
\end{equation}
We note that Freud \cite{refFreud1976} proved the result for orthonormal polynomials whilst \eqref{Freudconj} is for monic orthogonal polynomials.
Freud showed that if the limit exists for $m\in2\Z$, then it is equal to the expression in \eqref{Freudconj} but could only prove the existence of the limit \eqref{Freudconj} for $m=2,4,6$. Significant progress in the study of orthogonal polynomials associated with Freud weights was made when Magnus \cite{refMagnus85} proved the validity of Freud's conjecture for the recurrence coefficients when $m$ is an even positive integer and weight 
\begin{equation} \label{Freudweight}w(x)=\exp\{-Q(x)\}, \end{equation}
where $Q(x)$ is an even degree polynomial with positive leading coefficient. A more general proof of Freud's conjecture of the recursion coefficients for exponential weights is due to Lubinsky, Mhaskar, and Saff \cite{refLubinskyMS}; see also \cite{refDamelin,refFreud1971,refFreud1976,refNevai1986}. Deift \textit{et al}.\ \cite{refDKMVZ} discuss the asymptotics of orthogonal polynomials with respect to the weight \eqref{Freudweight} using a Riemann-Hilbert approach.

Bauldry, M\'{a}t\'{e}, and Nevai \cite{refBauldryMN88} showed that the convergent solutions of a system of smooth recurrence equations, whose Jacobian matrix satisfies a certain non-unimodularity condition, can be approximated by asymptotic expansions and they provide an application to approximate the recurrence coefficients associated with polynomials orthogonal with respect to the weight \eqref{Freudweight}, where $Q(x)$ is an even degree polynomial with positive leading coefficient.
Further, Bauldry and Zaslavsky obtained asymptotic expansions for the recurrence coefficients of a larger class
of orthogonal polynomials with exponential-type weights, cf.~\cite[Theorem 1, p.\ 496]{refMNZ85} and \cite[Theorem 5.1, p.\ 223]{refBauldryMN88}.

In a more general setting, a function of the form \eqref{Freudweight}
is called a \textit{Freud-type weight} if $Q(x)$ is an even, non-negative and continuous real valued function defined on the real line that satisfies certain conditions involving its derivatives of first and second order. Orthogonal polynomials with Freud-type weights as well as generalisations of the weight \eqref{Freudweight} in the form
\begin{equation} \label{Freudtype} w_{\ga}(x)=|x|^{\ga}\exp\{-Q(x)\}, \end{equation}
for $\ga >-1$, were studied by Levin and Lubinsky \cite{refLevinLubinsky01}. 
Lubinsky \cite{refLubinsky00}, see also \cite{refLubinsky87,refLubinsky93}, explored various types of asymptotics for polynomials orthogonal on finite and unbounded intervals, which includes a special treatment of polynomials in the Freud class. Levin and Lubinsky \cite{refLevinLubinsky01, refLevinLubinsky05} obtained many interesting properties of polynomials orthogonal with respect to the weight function \eqref{Freudtype} 
on the interval $[0,k)$, where $k \leq \infty$, including infinite-finite range inequalities, estimates for the Christoffel function, estimates for the largest zero, estimates for the spacing between zeros, estimates for the weighted orthogonal polynomials and estimates for the derivatives of the orthogonal polynomials.

Kasuga and Sakai \cite{refKasugaSakai} also considered generalized Freud-type weights of the form \eqref{Freudtype}.
Their results are similar to those for the Freud weight \eqref{Freudweight} 
obtained by Levin and Lubinsky \cite{refLevinLubinsky01}. They also showed that the zeros of the generalized Freud polynomials can be used to construct higher order Hermite-Fejer interpolation polynomials, which have their own applications in approximation theory \cite{refKasugaSakai2005}. Damelin \cite{refDamelin} used Freud equations to obtain the main term in the asymptotic expansion of the recurrence coefficients associated with orthogonal polynomials with respect to the weight \eqref{Freudtype}. The asymptotics of zeros of polynomials orthogonal with respect to the weight \eqref{Freudtype} were also derived by Kriecherbauer and McLaughlin \cite{refKMcLaughlin}. Wong and Zhang \cite{refWongZhang10} discussed the asymptotics of polynomials orthogonal with respect to the weight \eqref{Freudtype} when $Q(x)$ is an even polynomial of degree $2m$.
Using the results of Kriecherbauer and McLaughlin \cite{refKMcLaughlin}, Alfaro \textit{et al.} \cite{refAMPR} derived Mehler-Heine type asymptotic formulae for orthonormal polynomials with respect to the weight
\begin {equation}
w_{\a,m}(x)=x^{2m}\exp\left(-2|x|^\a\right),
\end{equation}
for $m\in\Z^{+}$ and $\a>1$.

Bleher and Its \cite{refBleherIts99,refBleherIts03} found several asymptotic results for semiclassical orthogonal polynomials 
with respect to the weight
\begin{equation} w(x)=\exp\{-NV(x)\},\end{equation}
where $V(x)=\tfrac14gx^4+\tfrac12tx^2$, with $g$ and $t$ parameters,
via a Riemann-Hilbert approach and applied these to prove the universality of the local distribution of eigenvalues in the matrix model with the double-well quartic interaction in the presence of two cuts, see also Wong and Zhang \cite{refWongZhang09}. 

Magnus \cite{refMagnus86} discussed the nonlinear discrete equation satisfied by the recurrence coefficients in the three-term recurrence relations for polynomials orthogonal with respect to exponential weights \eqref{Freudweight} 
and he found the relation of such equations to discrete equations for potentials such as
$Q(x) = x^4$ and $Q(x) = x^6$. Magnus \cite{refMagnus95} showed that the coefficients in the three-term recurrence relation for the Freud weight \cite{refFreud1976}
\begin{equation}\ww{x}=\exp\left(-x^4+tx^2\right),\qquad x\in\R,\label{Freud}\end{equation} with $t\in\R$ a parameter,
can be expressed in terms of simultaneous solutions, $q_n$, of the discrete equation
\begin{equation}\label{eq:dPIf}
q_n(q_{n-1}+q_n+q_{n+1})+2tq_n=n,
\end{equation}
which is {discrete \PI}\ (\dPI) -- see equation \eqref{nonlineardiffer} below for a more general version -- as earlier shown by Bonan and Nevai \cite[p.\ 135]{refBNevai}, and \comment{the differential equation
\begin{equation}\label{eq:PIVf}
\deriv[2]{q_n}{z}= \frac{1}{2q_n}\left(\deriv{q_n}{z}\right)^{2} + \frac{3}{2}q_n^3 + 4z q_n^2 + 2(z^2 +\tfrac12n)q_n- \frac{n^2}{2q_n},
\end{equation}
with $n\in\Z^+$, which is a special case of}%
the fourth \p\ equation (\PIV)
\begin{equation}\label{eq:PIV}
\deriv[2]{q}{z}= \frac{1}{2q}\left(\deriv{q}{z}\right)^{2} + \frac{3}{2}q^3 + 4z q^2 + 2(z^2 - A)q + \frac{B}{q},\end{equation}
where $A=-\tfrac12n$ and $B=-\tfrac12n^2$, with $n\in\Z^+$.
This connection between the recurrence coefficients for the Freud weight (\ref{Freud}) and simultaneous solutions of (\ref{eq:dPIf}) and (\ref{eq:PIV}) is due to Kitaev, see also \cite{refFIKa,refFIZ}.
Subsequently, this relation was studied by Bassom, Clarkson, and Hicks \cite{refBCH95}, who wrote tables of simultaneous solutions of \PIV\ \eqref{eq:PIV} and \dPI\ \eqref{eq:dPIf} in terms of parabolic cylinder functions, and later by Grammaticos and Ramani \cite{refGR98}. The relationship between solutions of \PIV\ \eqref{eq:PIV} and \dPI\ \eqref{eq:dPIf} is reflected in the striking similarity of the results for \PIV\ \eqref{eq:PIV} in \cite{refBCH95,refMurata85,refOkamotoPIIPIV} and those for
\dPI\ \eqref{eq:dPIf} in \cite{refGR98}. 
Bonan and Nevai \cite{refBNevai} proved that there is a unique positive solution of the discrete equation \eqref{eq:dPIf} with initial conditions
\[ \b_0=0,\qquad \b_{1} = \frac{\int_{-\infty}^\infty x^2 \exp(-x^4+tx^2)\, dx}{\int_{-\infty}^\infty \exp(-x^4+tx^2)\, dx}.\]

In \cite{refCJK}, we considered the \textit{generalized Freud weight} \comment{inner product
\begin{align*}\label{innerGF}
\langle p,q\rangle
 = \int_{-\infty}^{\infty} p(x) q(x) |x|^{2\lambda +1} \exp(-x^4+tx^2)\,d{x}, \quad t\in \R,
\end{align*}
and some of the properties of the orthogonal polynomials associated with this inner product.
Consider the positive Borel measure $\,d{\mu}(x)=w_{\la}(x;t)\,d{x} $ where}
\begin{equation}\label{genFreud}
w_{\la}(x;t)=|x|^{2\la +1}\exp(-x^4+tx^2),\qquad x\in\R,
\end{equation}
with $\la>-1$ and $t\in \R$ parameters and gave explicit expressions for the moments of this weight \eqref{genFreud}.
The first moment is
\begin{align}\label{mu0} \mu_0(t;\la)&= \int_{-\infty}^{\infty} |x|^{2\la +1} \exp \left( -x^4 +tx^2\right) \,d{x}\nonumber \\
&= \frac{\Gamma(\la+1)}{2^{(\la+1)/2}}\,\exp\left(\tfrac18t^2\right)\WhitD{-\la-1}\big(-\tfrac12\sqrt2\,t\big),\end{align}
where $\WhitD{v}(\xi)$ is the parabolic cylinder function with integral representation, cf.~\cite[\S12.5(i)]{refNIST}
\[\WhitD{\nu}(\xi) = \frac{\exp(-\tfrac{1}{4}\xi^2)}{\Gamma(-\nu)} \int_{0}^{\infty} s^{-\nu-1} \, \exp\left( -\tfrac{1}{2}s^2 -\xi s \right) d{s},\qquad \mathop{\rm Re}(\nu)<0,\]
and the higher moments are
\begin{align*}
\mu_{2n}(t;\la) &= \int_{-\infty}^{\infty} x^{2n} |x|^{2\la +1} \exp \left( -x^4 +tx^2\right) \,d{x} 
\equiv\mu_0(t;\la+n), \\ 
\mu_{2n-1}(t;\la) &= \int_{-\infty}^{\infty} x^{2n-1} |x|^{2\la +1} \exp \left( -x^4 +tx^2\right) \,d{x} =0,
\end{align*}
for $n\geq1$.
\comment{The polynomials $ p_n(x;t)= k_nx^n + \hdots$, $n\in \N$ for fixed parameter $t\in \R$ are orthogonal with respect to \eqref{genFreud} if
\begin{align*}\label{innerortho}
\langle p_n(x;t), p_m(x;t)\rangle= \delta_{mn},
\end{align*} where $\delta_{mn}$ is the Kronecker delta. In this case the polynomials satisfy the three term recurrence relation
\begin{equation}\label{orthonormalrecur}
xp_n(x;t) = b_{n+1}(t;\la)~p_{n+1}(x;t) + b_{n}(t;\la)~p_{n-1}(x;t), \qquad n=0,1,2,\hdots
\end{equation}
with the recurrence coefficients given by
 $\ds{b_n(t;\la) = \frac{k_{n-1}}{k_{n}}}$, $b_0=p_{-1}(x;t)=0$ and $p_0(x;t) = \left(\ds \int_{-\infty}^{\infty} w_{\lambda}(x;t)\,d{x} \right)^{-\frac{1}{2}}.$}%
The weight function \eqref{genFreud} is (weakly) differentiable on the non-compact support $\R$ and satisfies the distributional equation, known as Pearson equation (see \cite{refVanAssche07}),
\begin{align}\nonumber 
\deriv{}{x}\ln{w_{\la}(x;t)} = \frac{B(x)-A'(x)}{A(x)}=-4x^{3}+2tx + \frac{2\la+1}{x}, \end{align}
with $A(x)$ and $B(x)$ polynomials of minimal degree, so
\[A(x)=x, \quad B(x)= -4x^{4} +2tx^{2} +2\lambda +3. \]
Since $\text{deg}(A)=1$ and $\text{deg}(B)=4$, the polynomial sequence $\{ S_n(x;t)\}_{n=0}^{\infty}$, representing the sequence of monic polynomials orthogonal with respect to \eqref{genFreud}, is said to constitute a family of \textit{semiclassical orthogonal polynomials}\ \cite{refCJ,refHendriksenvR,refIsmail,refIsmail00,refMaroni}.

Monic orthogonal polynomials with respect to the symmetric weight \eqref{genFreud} satisfy the three-term recurrence relation
\begin{equation}xS_n(x;t)=S_{n+1}(x;t)+\b_{n}(t;\la)S_{n-1}(x;t)\label{eq:gfrr}\end{equation} where $\b_{n}(t;\la)>0$, $S_{-1}(x;t)=0$, $S_0(x;t)=1$, $\b_0(t;\la)=0$ and
\begin{align}
\b_1(t;\la)= \frac{\mu_2(t;\la)}{\mu_0(t;\la)}
&=\frac{\int_{-\infty}^{\infty} x^2 |x|^{2\la +1} \exp \left( -x^4 +tx^2\right) d{x}}{\int_{-\infty}^{\infty} |x|^{2\la +1} \exp \left( -x^4 +tx^2\right) d{x}}\nonumber\\
&=\tfrac12t+\tfrac12\sqrt{2}\,\frac{\WhitD{-\la}\big(-\tfrac12\sqrt2\,t\big)}{\WhitD{-\la-1}\big(-\tfrac12\sqrt2\,t\big)},
\label{beta1}\end{align}
see \S2 for further details. Several sequences of monic orthogonal polynomials related to the weight \eqref{genFreud} and its extensions have been studied in the literature. For instance, for $t= 0$, $\la=-\frac 12$, the asymptotic and analytic properties of the corresponding orthogonal polynomials were studied in \cite{refNevai1983}, while the case when $t>0$ and $\la=-\tfrac12$ is discussed in \cite{refAHM2016}.

The recurrence coefficients in the three-term recurrence relations associated with semiclassical orthogonal polynomials can often be expressed in terms of solutions of the \p\ equations and associated discrete \p\ equations. As shown in \cite{refCJK}, the recurrence coefficients $\b_n(t;\la)$ in the three term recurrence relation \eqref{eq:gfrr} are related to solutions of \PIV\ \eqref{eq:PIV} and satisfy the equation
\begin{equation}\label{eq:bn}
\deriv[2]{\b_n}{t}=\frac{1}{2\b_n}\left(\deriv{\b_n}{t}\right)^2+\tfrac32\b_n^3-t\b_n^2+(\tfrac18 t^2-\tfrac12A_n)\b_n+\frac{B_n}{16\b_n},\end{equation}
where the parameters $A_n$ and $B_n$ are given by
\begin{equation*}
\begin{pmatrix}A_{2n}\\ B_{2n}\end{pmatrix} =\begin{pmatrix} -2\la-n-1\\ -2n^2\end{pmatrix} , \qquad
\begin{pmatrix}A_{2n+1}\\ B_{2n+1}\end{pmatrix} =\begin{pmatrix}\la-n\\ -2(\la+n+1)^2\end{pmatrix}
\end{equation*}
as well as the nonlinear discrete equation
\begin{equation}\label{nonlineardiffer}
 4\b_n \left( \b_{n-1} + \b_{n} + \b_{n+1} -\tfrac12{t} \right)= n+ (2\lambda +1)\Delta_n,
 \end{equation}
where $\Delta_{n}:= \tfrac12[1-(-1)^{n}]$, 
which is the general discrete \PI\ (\dPI). We remark that the nonlinear discrete equation \eqref{nonlineardiffer} appears in the paper by Freud \cite[equation (23), p.\ 5]{refFreud1976}; see also \cite[\S2]{refANSVA15} for a historical review of the origin and study of equation \eqref{nonlineardiffer}.

The moments of certain semiclassical weights provide the link between the weight and the associated \p\ equation as shown in \cite{refCJ}. In \cite{refCJK} this was used to obtain the explicit expressions for the recurrence coefficients $\b_n(t;\la)$ in the three term recurrence relation \eqref{eq:gfrr} given by
\begin{subequations}\label{betathreeAAab}
\begin{align}
\b_{2n}(t;\la)
&=\deriv{}{t}\ln\frac{\tau_n(t;\la+1)}{\tau_n(t;\la)},\\
\b_{2n+1}(t;\la) &=\deriv{}{t}\ln\frac{\tau_{n+1}(t;\la)}{\tau_n(t;\la+1)},
\end{align}\end{subequations}
for $n\geq0$,
where $\tau_n(t;\la)$ is the 
Hankel determinant given by
\begin{align*}
\tau_n(t;\la)&
=\det\left[\deriv[j+k]{}{t}\mu_0(t;\la)\right]_{j,k=0}^{n-1},
\end{align*}
with $\tau_0(t;\la)=1$ and $\mu_0(t;\la)$ given by \eqref{mu0}.
\comment{$\phi_{\la}(t)$ given by
\begin{equation}\label{phin}
\phi_{\la}(t)=\mu_0(t;\la)=\frac{\Gamma(\la+1)}{2^{(\la+1)/2}}\,\exp\left(\tfrac18t^2\right)\WhitD{-\la-1}\big(-\tfrac12\sqrt2\,t\big).
\end{equation}}

Following our earlier work in \cite{refCJK}, here we are concerned with the asymptotic behaviour of the recurrence coefficient of the three-term recurrence relation satisfied by the generalized Freud polynomials and the asymptotic properties of the polynomials themselves. We review some pertinent results from \cite{refCJK} in \S2. In \S\ref{sec3} we consider the case where the parameter $t\to\infty$ whilst in \S\ref{sec4} we investigate the asymptotic behaviour as the degree $n$ of the polynomials tends to $\infty$. Existence and uniqueness of positive solutions of the nonlinear discrete equation \eqref{nonlineardiffer} are discussed in \S\ref{sec5} where we prove that unique, positive solutions exist for all $t\in\R$. Properties of the zeros of generalized Freud polynomials are investigated in \S\ref{sec6}.

\section{Generalized Freud polynomials}
The first few recurrence coefficients $\b_n(t;\la)$ are given by
\begin {subequations}\label{def:betan}\begin{align}
\b_1(t;\la)&=\Ph(t),\label{def:beta1}\\
\b_2(t;\la)&=\tfrac12t -\Ph(t)+\frac{\la+1}{2\Ph(t)},\\
\b_3(t;\la)&=-\frac{\la+1}{2\Ph(t)}-\frac{\Ph(t)}{2\Ph^2(t)-t\Ph(t)-\la-1},\\
\b_4(t;\la)&=\frac{t}{2(\la+2)}+\frac{\Ph(t)}{2\Ph^2(t)-t\Ph(t)-\la-1} \nonumber \\ &\qquad
+\frac{(\la+1)\big[(t^2+2\la+4)\Ph(t)+(\la+1)t\big]}{2(\la+2)\big[2(\la+2)\Ph^2(t)-(\la+1) t\Ph(t)-(\la+1)^2\big]},\comment{\\
\b_5(t;\la)&=\frac{(\la+1)t}{2(\la+2)}+\frac{2(\la+2)}{t}
-\frac{(\la+1)\big[(t^2+2\la+4)\Ph(t)+(\la+1)t\big]}{2(\la+2)\big[2(\la+2)\Ph^2(t)-(\la+1)t\Ph(t)-(\la+1)^2\big]} \\
&\qquad -\frac{2\big[(\la+1)t^2+4(\la+2)(2\la+3)\big]\Ph^2(t)-(\la+1)(t^2+8\la+18)t\Ph(t)-(\la+1)^2(t^2+8\la+16)}{2\big[2t^2\Ph^3(t)-(t^2-4\la-6)t\Ph^2(t)-3(\la+1)t^2\Ph(t)-2(\la+1)^2t\big]},\nonumber}
\end{align}\end{subequations}
where
\begin{align}\label{def:Phi}
\Ph(t)&=\deriv{}{t}\ln\left\{\WhitD{-\la-1}\big(-\tfrac12\sqrt2\,t\big)\exp\left(\tfrac18t^2\right)\right\}\nonumber\\
&=\tfrac12t+\tfrac12\sqrt{2}\,\frac{\WhitD{-\la}\big(-\tfrac12\sqrt2\,t\big)}{\WhitD{-\la-1}\big(-\tfrac12\sqrt2\,t\big)}.
\end{align}
It was shown in \cite{refCJK} that as $t\to \infty$
\begin{equation*}
\Ph(t)=\frac{t}{2}+\frac{\la}{t} +\frac{2\la(1-\la)}{t^3}+\frac{4\la(\la-1)(2\la-3)}{t^5}+\mathcal{O}\big(t^{-7}\big).
\end{equation*}
Hence, as $t\to \infty$
\begin{equation*}
\frac{1}{\Ph(t)}=\frac{2}{t} -\frac{4\la}{t^3}+\frac{8\la(2\la-1)}{t^5}+\mathcal{O}\big(t^{-7}\big).
\end{equation*}
Plots of $\b_n(t;\la)$, for $n=1,2,\ldots,10$, with $\la=\tfrac12$ are given in Figure~\ref{fig:betan}. We see that there is completely different behaviour for $\b_n(t;\la)$ as $t\to\infty$, depending on whether $n$ is even or odd, which is reflected in Lemma~\ref{betaasymp}. The different behaviour for $\b_n(t;\la)$ depending on whether $n$ is even or odd can be explained by the fact that
$\b_{2n}(t;\la)$ and $\b_{2n+1}(t;\la)$ satisfy different explicit expressions \eqref{betathreeAAab}, as well as different
differential equations \eqref{eq:bn} and difference equations \eqref{nonlineardiffer}.

\def\fig#1{\includegraphics[width=7cm]{#1}}
\begin{figure}[ht]
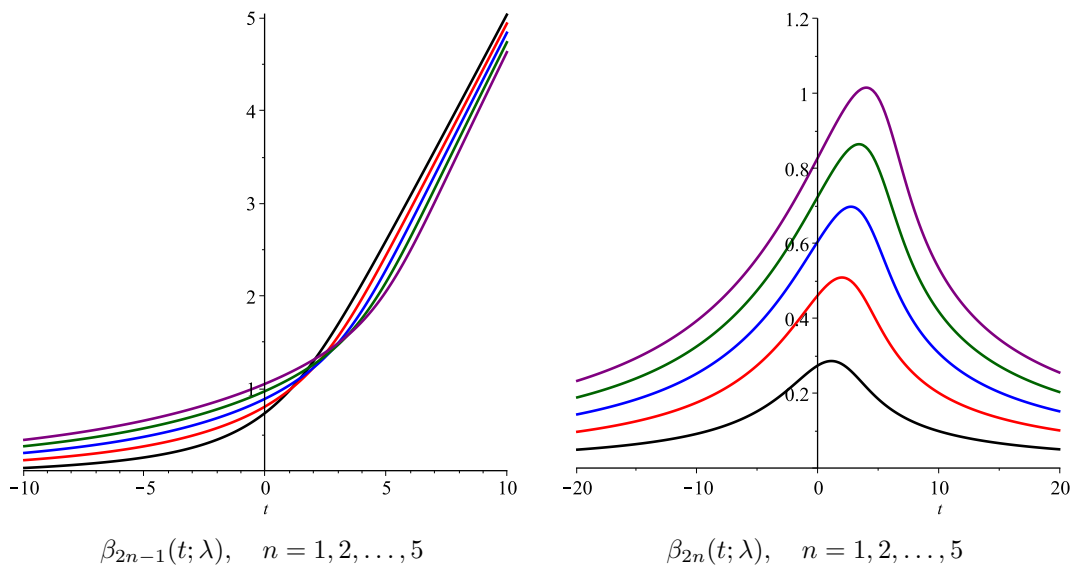

\[\begin{array}{c@{\quad}c}
\fig{GenFreudBetaodd} & \fig{GenFreudBetaeven}\\
\b_{2n-1}(t;\la),\quad n=1,2,\ldots,5 & \b_{2n}(t;\la),\quad n=1,2,\ldots,5
\end{array}\]
\caption{\label{fig:betan}
Plots of the recurrence coefficients $\b_{2n-1}(t;\la)$ and $\b_{2n}(t;\la)$, for $n=1$ (black), $n=2$ (red), $n=3$ (blue), $n=4$ (green) and $n=5$ (purple), with $\la=\tfrac12$.}
\end{figure}

\begin{lemma}{\label{betaasymp} As $t\to\infty$, the recurrence coefficient $\b_n(t;\la)$ has the asymptotic expansion
\begin{subequations}
\begin{align*}
\b_{2n}(t;\la)&=\frac{n}{t}-\frac{2n(2\la-n+1)}{t^3}+\mathcal{O}\big(t^{-5}\big),
\\ 
\b_{2n+1}(t;\la)&=\frac{t}{2}+\frac{\la-n}{t}-\frac{2(\la^2-4\la n+n^2-\la-n)}{t^3}+\mathcal{O}\big(t^{-5}\big),
\label{bn:asym12}\end{align*}\end{subequations}
for $n\in\N$. Further, as $t\to-\infty$
\begin{subequations}
\begin{align*}
\b_{2n}(t;\la)&=-\frac{n}{t}+\frac{2n (2\la+3n+1)}{t^3} +\mathcal{O}\big(t^{-5}\big),\\
\b_{2n+1}(t;\la)&=-\frac {\la+n+1}{t}+\frac { 2( \la +n+1)(\la+3n+2 )}{t^3} +\mathcal{O}\big(t^{-5}\big).
\end{align*}\end{subequations}
}\end{lemma}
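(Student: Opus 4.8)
The plan is to derive both expansions directly from the discrete \PI\ equation \eqref{nonlineardiffer},
\[4\b_n\left(\b_{n-1}+\b_n+\b_{n+1}-\tfrac12 t\right)=n+(2\la+1)\Delta_n,\]
by substituting an asymptotic power series in $1/t$ and matching coefficients order by order. The two limits $t\to\pm\infty$ call for structurally different ans\"atze, and this is exactly the source of the even/odd dichotomy visible in Figure~\ref{fig:betan}: as $t\to+\infty$ the odd-indexed coefficients grow like $\tfrac12 t$ while the even-indexed ones decay like $1/t$, whereas as $t\to-\infty$ every $\b_n$ decays like $1/t$. In either regime the low-order explicit formulas \eqref{def:betan}, together with the expansion of $\Ph(t)$ in \eqref{def:Phi}, serve both as the anchor of an induction and as a consistency check.

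For $t\to+\infty$ I would posit
\[\b_{2m}(t;\la)=\frac{p_m}{t}+\frac{q_m}{t^3}+\O(t^{-5}),\qquad \b_{2m+1}(t;\la)=\frac{t}{2}+\frac{r_m}{t}+\frac{s_m}{t^3}+\O(t^{-5}),\]
recalling $\Delta_{2m}=0$ and $\Delta_{2m+1}=1$. First I would justify that each $\b_n$ admits an expansion of precisely this shape — only odd powers of $t$, with the stated leading term — by induction on $n$ using $\b_0=0$, the known expansion of $\b_1=\Ph(t)$, and the recurrence solved as $\b_{n+1}=\tfrac14[n+(2\la+1)\Delta_n]\,\b_n^{-1}-\b_{n-1}-\b_n+\tfrac12 t$. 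The key point is that in the even-to-odd step the term $\tfrac14(2m)\,\b_{2m}^{-1}$ contributes a $\tfrac12 t$ that reinstates the large leading term, whereas in the odd-to-even step the explicit $\tfrac12 t$ cancels against $-\b_{2m+1}$, keeping the even coefficient small. With the form secured, substituting into \eqref{nonlineardiffer} and matching the orders $t^0$ and $t^{-2}$ for even and odd $n$ yields, successively, $p_m=m$, then $r_m=\la-m$ from $p_m+r_m+p_{m+1}=m+\la+1$, then $q_m=-2p_m(r_{m-1}+p_m+r_m)=-2m(2\la-m+1)$, and finally $s_m$ from the order-$t^{-2}$ relation $q_m+s_m+q_{m+1}=-2(\la-m)(\la+m+1)$, which gives $s_m=-2(\la^2-4\la m+m^2-\la-m)$ after inserting $q_m$ and $q_{m+1}$. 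These are exactly the claimed coefficients.

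For $t\to-\infty$ I would instead use the uniform ansatz $\b_n(t;\la)=u_n/t+v_n/t^3+\O(t^{-5})$ for all $n$, anchored by the $t\to-\infty$ expansion of $\Ph(t)$ and propagated by the same inductive argument; here there is no large leading term to track, since the bracket in \eqref{nonlineardiffer} is now dominated by $-\tfrac12 t$. Matching at order $t^0$ gives $u_n=-\tfrac12[n+(2\la+1)\Delta_n]$, that is $u_{2m}=-m$ and $u_{2m+1}=-(\la+m+1)$, and at order $t^{-2}$ gives $v_n=2u_n(u_{n-1}+u_n+u_{n+1})$, which reproduces the stated third-order coefficients once the neighbouring leading values are inserted. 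The main obstacle is not the algebra but the rigorous justification of the ans\"atze: one must show that each $\b_n$ genuinely possesses an asymptotic expansion in odd powers of $1/t$ with the correct leading behaviour, and in the $t\to+\infty$ case carefully control the delicate cancellations of the $\tfrac12 t$ terms that enforce the even/odd alternation. Once existence and the leading structure are established by induction, the determination of the remaining coefficients is routine.
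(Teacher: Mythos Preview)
Your approach is correct and essentially self-contained, whereas the paper does not prove this lemma at all: its entire proof reads ``See \cite[Lemma 8]{refCJK}''. Your strategy---insert parity-dependent ans\"atze into the discrete equation \eqref{nonlineardiffer}, justify the form of the ansatz by induction on $n$ starting from $\b_0=0$ and the known expansion of $\b_1=\Ph(t)$, then read off the coefficients---is the natural one and your coefficient computations check out in both regimes.

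Two small points worth tightening. First, for the $t\to-\infty$ base case you need the expansion of $\Ph(t)$ in that limit; the paper only records the $t\to+\infty$ expansion in \S2, but the required behaviour $\Ph(t)=-(\la+1)/t+\O(t^{-3})$ follows from the standard asymptotics of $D_\nu(z)$ (and is in fact used later in the proof of Theorem~\ref{Th48}). Second, in the $t\to+\infty$ induction the odd-to-even step with $m=0$ is degenerate because $\b_0=0$ exactly, so the step ``$\b_{n+1}=\tfrac14[n+(2\la+1)\Delta_n]\b_n^{-1}-\cdots$'' is not available there; you should simply start the induction at $n=1$ using the explicit $\b_1$ and $\b_2$ from \eqref{def:betan}. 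With those caveats your argument is complete.
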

\begin{proof}{See \cite[Lemma 8]{refCJK}}\end{proof}

Using the recurrence relation \eqref{eq:gfrr}, with $\b_n(t;\la)$ given by \eqref{def:betan}, the first few polynomials $S_n(x;t)$ are given by
\begin{subequations}
\begin{align*}
S_1(x;t)&={x},\\
S_2(x;t)&=x^2-\Ph(t),\\
S_3(x;t)&=x^3-\frac {t\Ph(t) +\la+1}{2\Ph(t)}\,x,\\
S_4(x;t)&=x^4-{\frac {2t\Ph^{2}(t)-({t}^{2}+2)\Ph(t) -(\la+1) t}{ 2\left[2\Ph^{2}(t)-t\Ph(t) -\la-1\right]}}\,x^2
-{\frac {2(\la+2)\Ph^2(t)-(\la+1) t\Ph(t)-(\la+1)^2}{2\left[2\Ph^{2}(t)-t\Ph(t) -\la-1\right]}},\comment{\\
S_5(x;t)&=x^5-{\frac {2(\la+3)t\Ph^2(t)-(\la+1) (t^2-2)\Ph-(\la+1)^2t}{2\big[2(\la+2)\Ph^2-(\la+1) t\Ph-(\la+1)^2\big]}}\,x^3\\
&\phantom{=x^5\ } -{\frac {\big[2(\la+2)^2-t^2\big]\Ph^{2}-(\la+1)(\la+4)t\Ph -(\la+1)^2(\la+3)}{2\big[2(\la+2)\Ph^2-(\la+1) t\Ph-(\la+1)^2\big]}}\,x,\nonumber}
\end{align*}\end{subequations}
with $\Ph(t)$ given by \eqref{def:Phi}.
Plots of the polynomials $S_n(x;t)$, $n=3,4,\ldots,8$, with $\la=\tfrac12$, for $t=0,1,\ldots,4$, are given in Figure~\ref{fig:Sn1}.
These show that the magnitude of the roots of $S_n(x;t)$ increases as $t$ increases (see Theorem \ref{zerofreud} for further details and a proof). In fact, as shown in \S\ref{sec3} below, the roots of $S_{2n}(x;t)$ and $S_{2n+1}(x;t)/x$ tend to $\pm(\tfrac12t)^{1/2}$ as $t\to\infty$. Plots of the polynomials $S_n(x;t)$, $n=3,4,\ldots,8$, with $\la=\tfrac12$, at times $t=0,1,\ldots,5$ are given in Figure~\ref{fig:Sn2}, which illustrates the interlacing of the roots of successive polynomials, as discussed in Theorem \ref{zerofreud}.

\begin{figure}[ht]
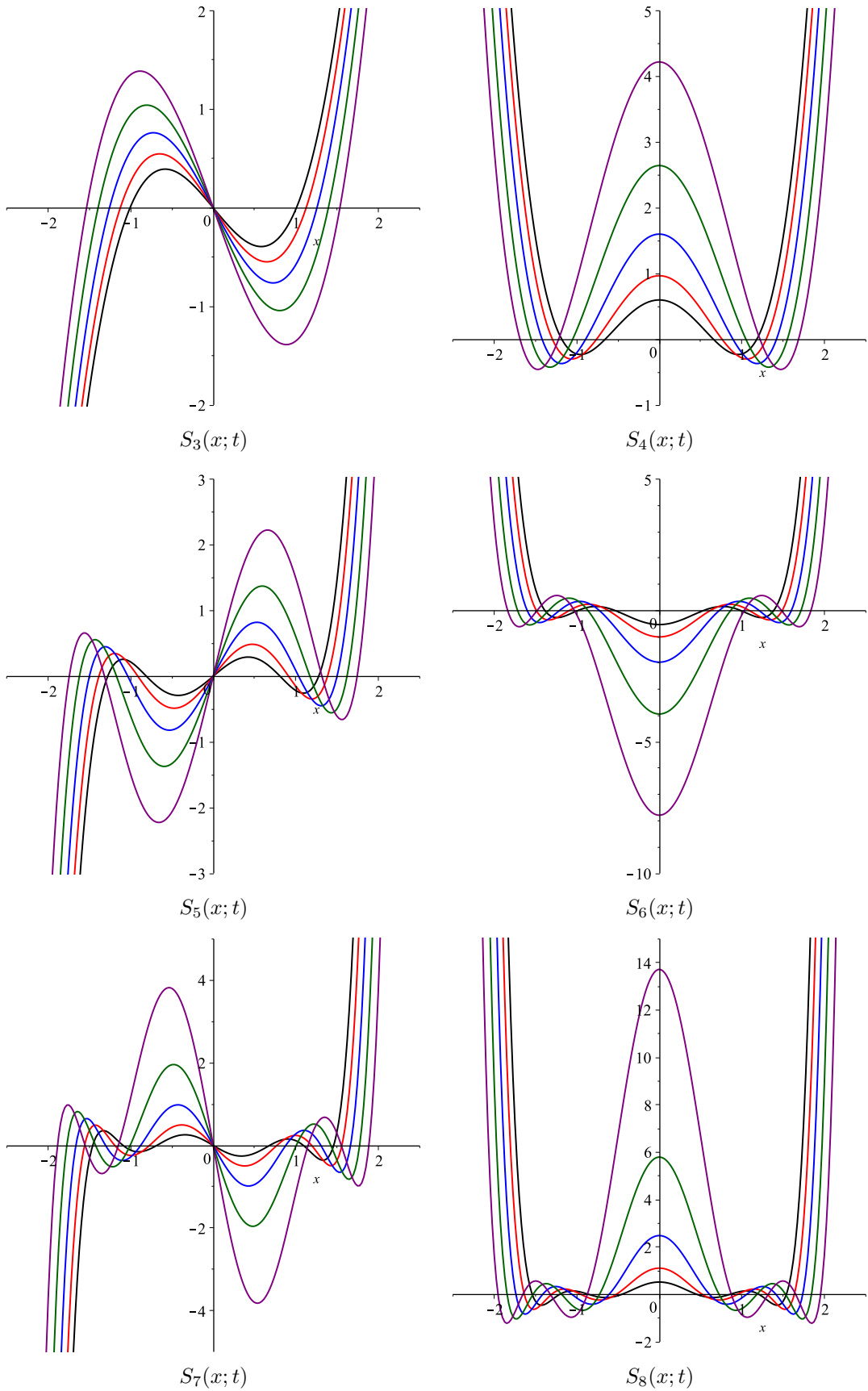

\[\begin{array}{c@{\quad}c}
\fig{GenFreudS3} & \fig{GenFreudS4} \\
S_3(x;t) & S_4(x;t)\\[5pt]
\fig{GenFreudS5} & \fig{GenFreudS6} \\
S_5(x;t) & S_6(x;t)\\[5pt]
\fig{GenFreudS7} & \fig{GenFreudS8}\\
S_7(x;t) & S_8(x;t)
\end{array}\]
\caption{\label{fig:Sn1}Plots of the polynomials $S_n(x;t)$, $n=3,4,\ldots,8$ for $t=0$ (black), $t=1$ (red), $t=2$ (blue), $t=3$ (green) and $t=4$ (purple), with $\la=\tfrac12$.}
\end{figure}

\begin{figure}[ht]
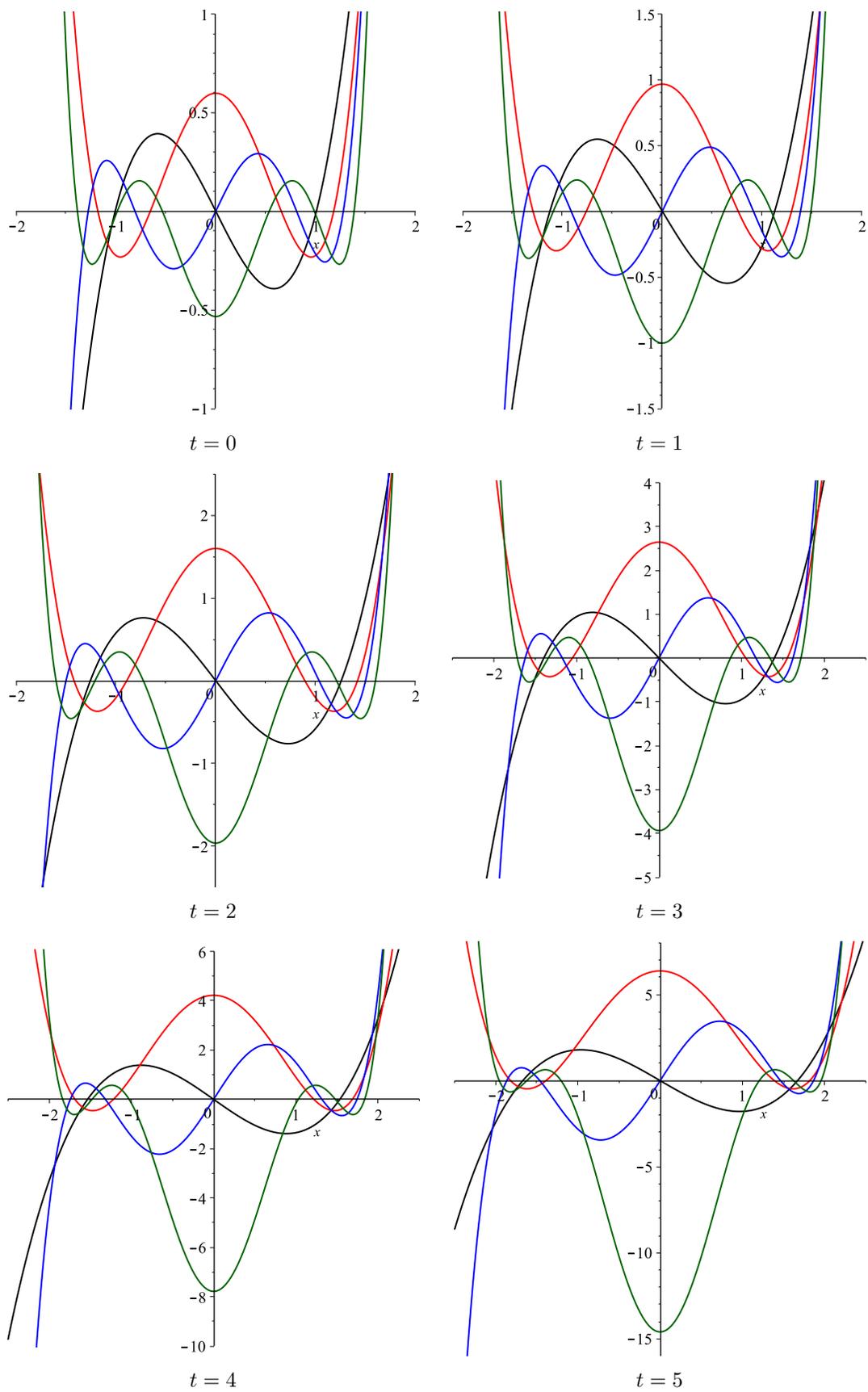

\[\begin{array}{c@{\quad}c}
\fig{GenFreudSnt0}
& \fig{GenFreudSnt1} \\
t=0 & t=1 \\[5pt]
\fig{GenFreudSnt2}
& \fig{GenFreudSnt3} \\
t=2 & t=3 \\[5pt]
\fig{GenFreudSnt4}
& \fig{GenFreudSnt5}\\
 t=4 & t=5
\end{array}\]
\caption{\label{fig:Sn2}Plots of the polynomials $S_3(x;t)$ (black), $S_4(x;t)$ (red), $S_5(x;t)$ (blue), $S_6(x;t)$ (green) for $t=0,1,\ldots,5$, with $\la=\tfrac12$.}
\end{figure}

\section{\label{sec3}Asymptotic properties of generalized Freud Polynomials as \boldmath{$t\to\infty$}\label{a1}}
In this section we are concerned with the behaviour of the generalized Freud polynomials $S_n(x;t)$ as $t\to\infty$.
From Lemma~\ref{betaasymp} we see that
\[ \lim_{t\to\infty}\b_{2n}(t;\la) = 0, \qquad \lim_{t\to\infty}\b_{2n+1}(t;\la) = \tfrac12t,\]
i.e.
\[ \lim_{t\to\infty}\b_{n}(t;\la) =\tfrac14[1-(-1)^n]t.\] 

\begin{lemma}{Suppose that the monic polynomials $\SS_n(x;t)$ are generated by the three-term recurrence relation
\begin{equation}\label{ttrr:Q}x\SS_n(x;t)=\SS_{n+1}(x;t)+\bb_n(t)\SS_{n-1}(x;t), \end{equation}
where $\bb_n(t)=\tfrac14[1-(-1)^n]t$, with $\SS_0(x;t)=1$. Then
\begin{equation}\label{def:Snn} \SS_{2n}(x;t)=(x^2-\tfrac12t)^n,\qquad \SS_{2n+1}(x;t)=x(x^2-\tfrac12t)^n.\end{equation}
}\end{lemma}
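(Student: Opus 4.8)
The plan is to prove both identities in \eqref{def:Snn} simultaneously by induction on $n$, exploiting the fact that the recurrence coefficient takes only two values. Reading off $\bb_n(t)=\tfrac14[1-(-1)^n]t$, we have $\bb_{2n}(t)=0$ and $\bb_{2n+1}(t)=\tfrac12t$. Writing the three-term recurrence \eqref{ttrr:Q} in the form $\SS_{n+1}=x\SS_n-\bb_n\SS_{n-1}$ with the standard convention $\SS_{-1}(x;t)=0$, this decoupling is precisely what makes the claimed closed form work: the vanishing of $\bb_{2n}$ means each odd-index polynomial is simply $x$ times the preceding even-index one, while the value $\bb_{2n+1}=\tfrac12t$ is exactly what is needed to reproduce the factor $x^2-\tfrac12t$ at the next even step. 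It is convenient to abbreviate $u=x^2-\tfrac12t$, so that the assertion reads $\SS_{2n}=u^n$ and $\SS_{2n+1}=xu^n$.

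First I would dispose of the base case. From $\SS_0=1=u^0$ the recurrence with $\bb_0=0$ and $\SS_{-1}=0$ gives $\SS_1=x\SS_0=x=xu^0$, so both formulas hold for $n=0$. For the inductive step I assume $\SS_{2n}=u^n$ and derive both $\SS_{2n+1}$ and $\SS_{2n+2}$ in turn. The even-to-odd step uses $\bb_{2n}=0$, giving
\[\SS_{2n+1}=x\SS_{2n}-\bb_{2n}\SS_{2n-1}=x\,u^n,\]
which is the asserted odd-index form. The odd-to-even step then uses $\bb_{2n+1}=\tfrac12t$:
\[\SS_{2n+2}=x\SS_{2n+1}-\bb_{2n+1}\SS_{2n}=x\cdot x\,u^n-\tfrac12t\,u^n=\left(x^2-\tfrac12t\right)u^n=u^{n+1}.\]
Thus from $\SS_{2n}=u^n$ I recover $\SS_{2n+1}=xu^n$ and $\SS_{2n+2}=u^{n+1}$, closing the induction and establishing \eqref{def:Snn} for all $n\in\N$.

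I do not anticipate a genuine obstacle here; the argument is a direct two-step induction and the only thing requiring care is the bookkeeping of even versus odd indices together with the $\SS_{-1}=0$ convention needed to launch the base case cleanly. The one conceptual point worth flagging in the write-up is \emph{why} the particular constant $\tfrac12t$ appears: it is the unique choice for $\bb_{2n+1}$ that turns $x\cdot\SS_{2n+1}=x^2u^n$ into $(x^2-\tfrac12t)u^n=u^{n+1}$, so the limiting recurrence coefficients of Lemma~\ref{betaasymp} are exactly those that force the $t\to\infty$ limit polynomials to collapse onto powers of $x^2-\tfrac12t$, consistent with the stated behaviour of the roots tending to $\pm(\tfrac12t)^{1/2}$.
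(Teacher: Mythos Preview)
Your proof is correct and follows essentially the same approach as the paper: derive the two-step relations $\SS_{2n+1}=x\SS_{2n}$ (from $\bb_{2n}=0$) and $\SS_{2n+2}=(x^2-\tfrac12t)\SS_{2n}$ (from $\bb_{2n+1}=\tfrac12t$), then conclude by induction from $\SS_0=1$. The paper simply states these two relations and says the result ``immediately follows,'' while you spell out the induction explicitly with the $\SS_{-1}=0$ convention and add the interpretive remark about the role of $\tfrac12t$.
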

\begin{proof}
From the three-term recurrence relation (\ref{ttrr:Q}) we have
\begin{align*}\SS_{2n+1}(x;t)&=x\SS_{2n}(x;t),\\
\SS_{2n+2}(x;t)&=x\SS_{2n+1}(x;t)-\tfrac12t\SS_{2n}(x;t) 
= (x^2-\tfrac12t)\SS_{2n}(x;t).
\end{align*}
Since $\SS_0(x;t)=1$ then the result immediately follows.
\comment{\begin{equation*}
\SS_{2n+1}(x;t)=x(x^2-\tfrac12t)^n,\qquad
\SS_{2n}(x;t)=(x^2-\tfrac12t)^n,
\end{equation*}
as required.}\end{proof}

In the limit as $t\to\infty$, we expect that the generalized Freud polynomials $S_n(x;t)$ will tend to the polynomials $\SS_{n}(x;t)$, see Remark~\ref{thm:Sn} below.
To show this we first define the polynomials $Q_n(y;t)$ and $\QQ_n(y)$ as follows
\begin{subequations}\label{def:Rn}\begin{align}
Q_n(y;t)&=(\tfrac12t)^{-n/2} S_n\big((\tfrac12t)^{1/2}\,y;t\big),\\
\QQ_n(y)&=(\tfrac12t)^{-n/2} \SS_n\big((\tfrac12t)^{1/2}\,y;t\big), 
\end{align}\end{subequations}
so from \eqref{def:Snn} we have
\begin{equation}\label{def:QPn}
\QQ_{2n+1}(y)=y(y^2-1)^n,\qquad
\QQ_{2n}(y)=(y^2-1)^n.\end{equation}
Plots of polynomials $Q_n(y;t)$, $n=3,4,\ldots,12$, for $t=20$ are given in Figure~\ref{fig:Qn}.

\begin{figure}[ht]
\[\begin{array}{c@{\quad}c}
\includegraphics[width=6cm]{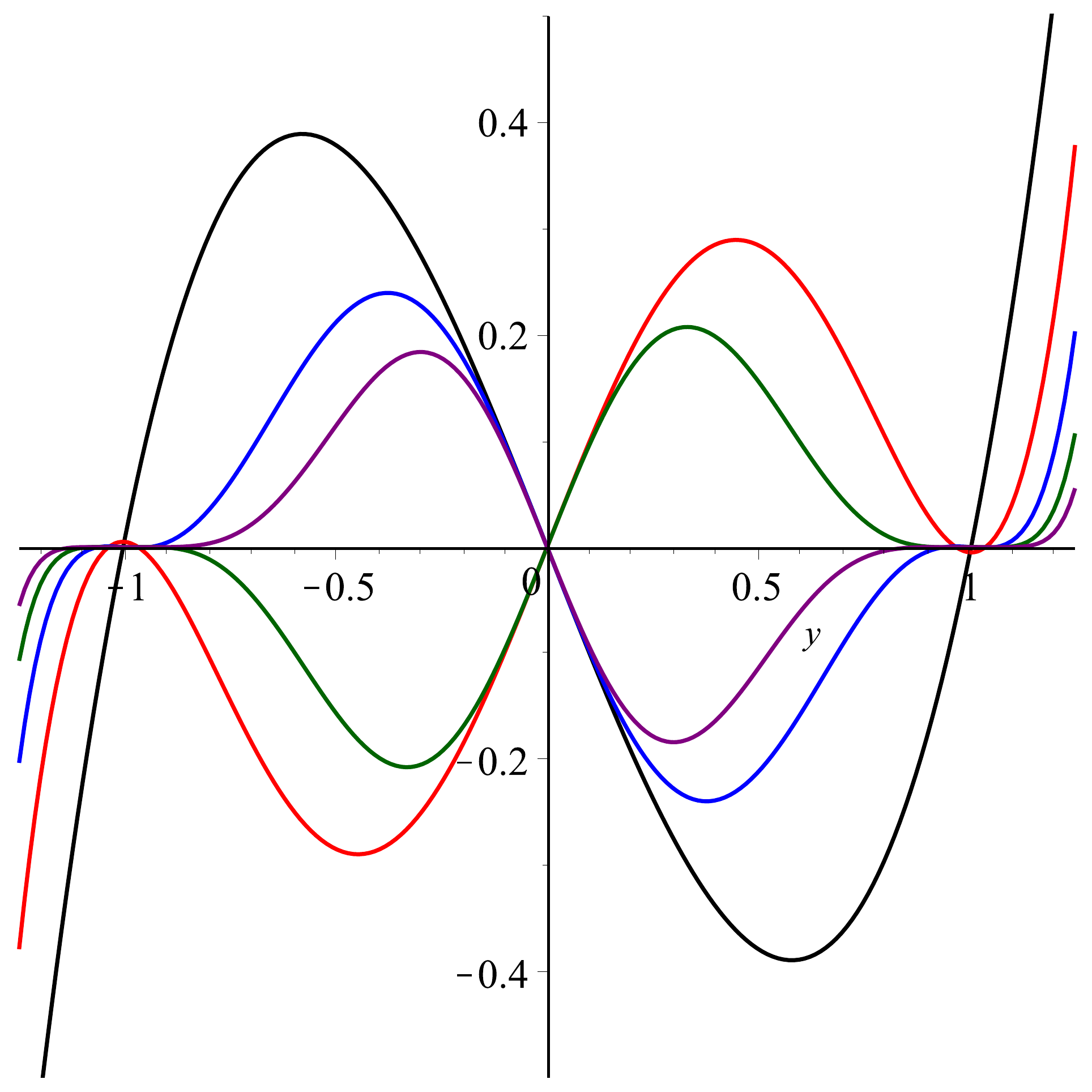} & \includegraphics[width=6cm]{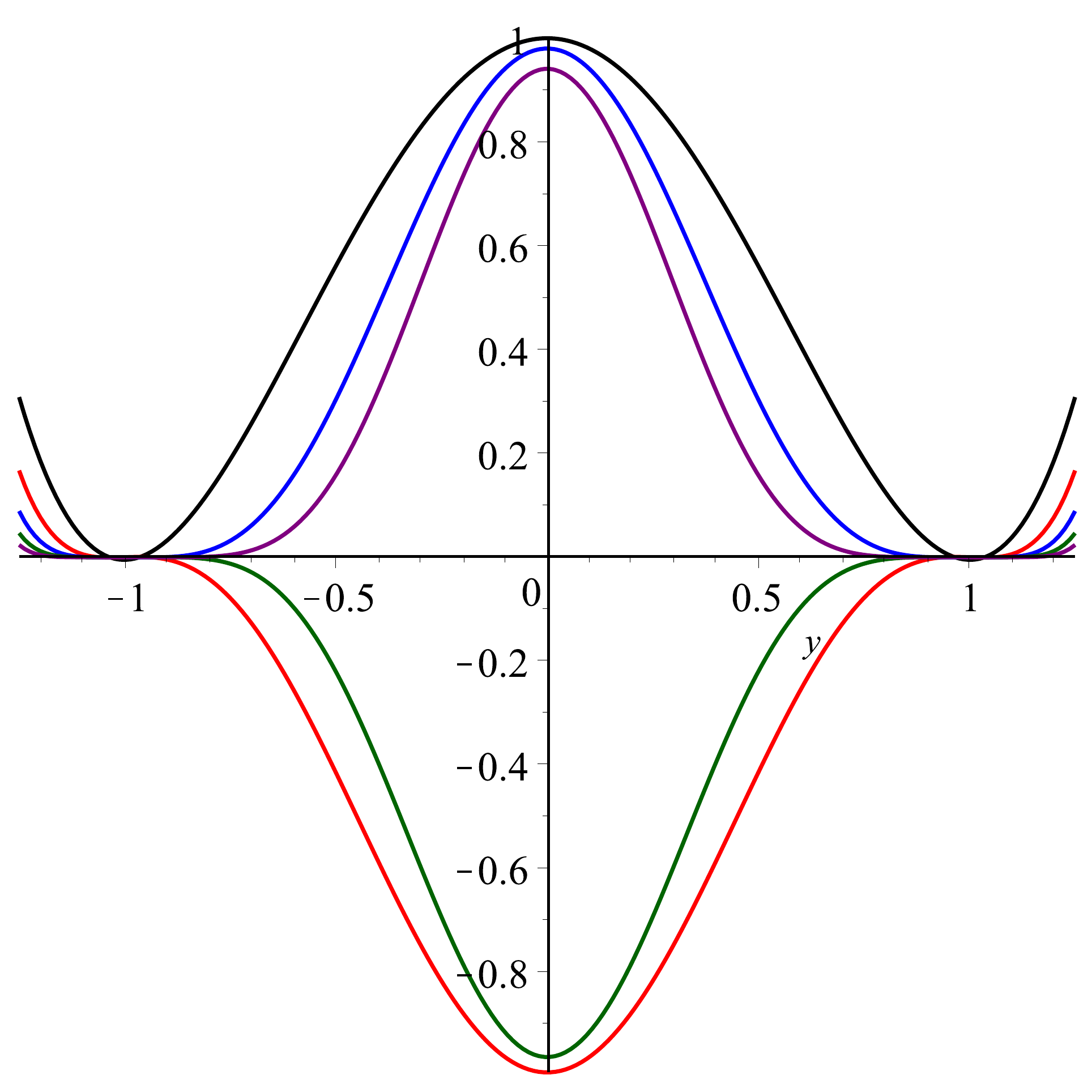}\\
Q_{2n-1}(y;t) & Q_{2n}(y;t) \\[5pt]
\end{array}\]
\caption{\label{fig:Qn}Plots of polynomials $Q_{2n-1}(y;t)$ and $Q_{2n}(y;t)$, for $n=1$ (black), $n=2$ (red), $n=3$ (blue), $n=4$ (green) and $n=5$ (purple), when $t=20$, with $\la=\tfrac12$.}
\end{figure}

\begin{theorem}{\label{thm32}Suppose that $Q_{2n}(y;t)$ and $\QQ_{2n}(y)$ are given by \eqref{def:Rn} and \eqref{def:QPn}, respectively, then as $t\to\infty$
\begin{subequations}\label{eq:36} \begin{align}\label{eq:36a}
 {Q_{2n}(y;t)} &={\QQ_{2n}(y)} -\frac{2n\la\QQ_{2n-2}(y)+n(n-1)\QQ_{2n-4}(y)}{t^2}+\O\left(t^{-4}\right),\\
 {Q_{2n+1}(y;t)} &={\QQ_{2n+1}(y)} -\frac{2n(\la+1)\QQ_{2n-1}(y)+n(n-1)\QQ_{2n-3}(y)}{t^2} +\O\left(t^{-4}\right).\label{eq:36b}
 \end{align}\end{subequations}
}\end{theorem}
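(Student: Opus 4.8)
The plan is to work entirely at the level of the three-term recurrence, rescaled to the variable $y$. Substituting $x=(\tfrac12t)^{1/2}y$ into \eqref{eq:gfrr} and using the definition \eqref{def:Rn} of $Q_n(y;t)$, the powers of $(\tfrac12t)^{n/2}$ telescope and yield the monic recurrence
\[yQ_n(y;t)=Q_{n+1}(y;t)+\gamma_n(t)\,Q_{n-1}(y;t),\qquad \gamma_n(t):=\frac{2\b_n(t;\la)}{t},\]
with $Q_0=1$ and $Q_1=y$; the same substitution sends \eqref{ttrr:Q} to $y\QQ_n=\QQ_{n+1}+\widetilde\gamma_n\QQ_{n-1}$ with $\widetilde\gamma_n=\tfrac12[1-(-1)^n]$, so that $\widetilde\gamma_{2n}=0$ and $\widetilde\gamma_{2n+1}=1$. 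The first step is then to divide the two expansions of Lemma~\ref{betaasymp} by $\tfrac12t$: this gives $\gamma_{2n}(t)=\widetilde\gamma_{2n}+2n/t^2+\O(t^{-4})$ and $\gamma_{2n+1}(t)=\widetilde\gamma_{2n+1}+2(\la-n)/t^2+\O(t^{-4})$, an expansion containing only even powers of $1/t$, whose $t^{-2}$ coefficients I abbreviate $c_{2n}=2n$ and $c_{2n+1}=2(\la-n)$.

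Second, I would substitute the ansatz $Q_n(y;t)=\QQ_n(y)+P_n(y)/t^2+\O(t^{-4})$ into the rescaled recurrence. The $\O(1)$ terms cancel by virtue of the $\QQ_n$-recurrence, and matching the coefficients of $1/t^2$ produces the inhomogeneous recurrence
\[P_{n+1}(y)=yP_n(y)-\widetilde\gamma_n P_{n-1}(y)-c_n\QQ_{n-1}(y),\qquad P_0=P_1=0,\]
the vanishing initial data coming from the fact that $Q_0=\QQ_0$ and $Q_1=\QQ_1$ exactly. That the ansatz is legitimate — that $Q_n$ genuinely expands in even powers of $1/t$ with leading term $\QQ_n$, so that the stated remainder is $\O(t^{-4})$ and not merely $\O(t^{-3})$ — follows by induction on $n$ from the recurrence for $Q_n$, since $Q_0,Q_1$ are $t$-independent and each $\gamma_n$ is an even-power series in $1/t$ beginning at $\widetilde\gamma_n$.

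Third, I would establish the closed forms
\[P_{2n}=-\bigl(2n\la\,\QQ_{2n-2}+n(n-1)\QQ_{2n-4}\bigr),\qquad P_{2n+1}=-\bigl(2n(\la+1)\QQ_{2n-1}+n(n-1)\QQ_{2n-3}\bigr)\]
by induction, using only the $P_n$-recurrence together with the two instances $y\QQ_{2n}=\QQ_{2n+1}$ and $y\QQ_{2n-1}=\QQ_{2n}+\QQ_{2n-2}$ of the rescaled $\QQ$-recurrence. Concretely, inserting the even formula into $P_{2n+1}=yP_{2n}-2n\QQ_{2n-1}$ and collecting terms reproduces the odd formula, while inserting the odd formula into $P_{2n+2}=yP_{2n+1}-P_{2n}-2(\la-n)\QQ_{2n}$ reproduces the even formula with $n$ advanced by one; the base cases $P_2=-2\la$ and $P_3=-2(\la+1)y$ are checked directly. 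This verifies \eqref{eq:36a} and \eqref{eq:36b}.

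Every individual computation here is elementary, so the substantive point is the structural bookkeeping rather than any single identity: one must be sure that no odd powers of $1/t$ sneak into the expansion of $Q_n$ and that the remainder estimate propagates cleanly through the recurrence. Because $y$ is a free indeterminate and every object in sight is a polynomial in $y$ whose coefficients are bona fide asymptotic series in $1/t$, the parity and remainder control are handled by the same induction that justifies the ansatz; I expect that inductive verification of the expansion's form — establishing simultaneously that the error is $\O(t^{-4})$ and that $P_n$ is exactly the first correction — to be the part requiring the most care.
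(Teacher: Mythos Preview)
Your proposal is correct and takes essentially the same approach as the paper: both arguments rescale the three-term recurrence \eqref{eq:gfrr} to the $y$-variable, feed in the asymptotics of $\b_n$ from Lemma~\ref{betaasymp}, and run an induction on $n$ using the identities $y\QQ_{2m}=\QQ_{2m+1}$ and $y\QQ_{2m-1}=\QQ_{2m}+\QQ_{2m-2}$. The only organisational difference is that you first isolate the $t^{-2}$ coefficient $P_n$ and induct on its clean recurrence $P_{n+1}=yP_n-\widetilde\gamma_nP_{n-1}-c_n\QQ_{n-1}$, whereas the paper carries the full expansion $Q_n=\QQ_n+\cdots+\O(t^{-4})$ through each inductive step; your separation of the parity/remainder justification from the coefficient computation is arguably tidier, but the substance is the same.
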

\begin{proof}As $S_n(x;t)$ satisfies the three-term recurrence relation \eqref{eq:gfrr},
then using \eqref{def:Rn} we see that $Q_n(y;t)$ satisfies the three-term recurrence relation
\[ Q_{n+1}(y;t)=yQ_n(y;t)-\frac{2\b_n(t;\la)}{t} Q_{n-1}(y;t).\]
We shall prove \eqref{eq:36} by induction. By definition $Q_1(y;t)=y$, $Q_0(y;t)=1$, $\QQ_1(y)=y$ and
\[ \begin{split} Q_2(y;t) &=yQ_1(y;t)-\frac{2\b_1(t;\la)}{t} Q_{0}(y;t) = y^2-\frac{2\Ph(t)}{t},\\
{\QQ_{2}}(y)&=y^2-1,\end{split}\]
therefore
\[ \begin{split}
{Q_{2}} &= {\QQ_{2}} + 1-\frac{2\Ph}{t}\\
&= {\QQ_{2}} + 1-\frac{2}{t}\left\{\frac{t}{2}+\frac{\la}{t}-\frac{2\la(\la-1)}{t^3}+\O\left(t^{-5}\right)\right\}\\
&= {\QQ_{2}} -\frac{2\la}{t^2} +\O\left(t^{-4}\right)\\
&= {\QQ_{2}} -\frac{2\la\QQ_{0}}{t^2} +\O\left(t^{-4}\right)
\end{split}\] which shows \eqref{eq:36a} is true for $n=1$.
Also, by definition
\[\begin{split} Q_3(y;t) &=yQ_2(y;t)-\frac{2\b_2(t;\la)}{t} Q_{1}(y;t)\\
&=y \left\{y^2-\frac{2\Ph(t)}{t}\right\} -\frac{2y}{t}\left\{\frac{t}{2} -\Ph(t)+\frac{\la+1}{2\Ph(t)} \right\}\\
&= y(y^2-1)-\frac{(\la+1)y}{t\Ph(t)}
\end{split}\] and so, since $\QQ_{3}=y(y^2-1)$, then
\[ \begin{split}
Q_{3}&=\QQ_{3} -\frac{(\la+1)y}{t}\left\{ \frac{2}{t}-\frac{4\la}{t^3}+\O\left(t^{-5}\right)\right\}\\
&=\QQ_{3} -\frac{2(\la+1)y}{t^2} +\O\left(t^{-4}\right)\\
&=\QQ_{3} -\frac{2(\la+1)\QQ_{1} }{t^2} +\O\left(t^{-4}\right)
\end{split}\] which shows \eqref{eq:36b} is true for $n=1$.

Next suppose that \eqref{eq:36a} is true. Since
\[ \b_{2n+1} =\frac{t}{2}+\frac{\la-n}{t}+\O\left(t^{-3}\right),\]
then
\[ \begin{split}
{Q_{2n+2}}&=y {Q_{2n+1}} - \frac{2\b_{2n+1}}{t} Q_{2n}\\
&=y\left\{{\QQ_{2n+1}} -\frac{2n(\la+1)\QQ_{2n-1}+n(n-1)\QQ_{2n-3}}{t^2} +\O\left(t^{-4}\right)\right\}\\
&\qquad-\left\{1+\frac{2(\la-n)}{t^2}+\O\left(t^{-4}\right) \right\} 
\left\{{\QQ_{2n}} -\frac{2n\la\QQ_{2n-2}+n(n-1)\QQ_{2n-4}}{t^2}+\O\left(t^{-4}\right)\right\}\\
&=y\QQ_{2n+1}-\QQ_{2n}
-\frac{2(\la-n)\QQ_{2n}+2n(\la+1)y\QQ_{2n-1}-2n\la\QQ_{2n-2}}{t^2}\\ &\qquad
-\frac{n(n-1)[y\QQ_{2n-3}-\QQ_{2n-4}]}{t^2} +\O\left(t^{-4}\right)\\
&=\QQ_{2n+2}
-\frac{2(\la-n)\QQ_{2n}+2n(\la+1)[\QQ_{2n}+\QQ_{2n-2}]-2n\la\QQ_{2n-2}}{t^2} 
-\frac{n(n-1)\QQ_{2n-2}}{t^2} +\O\left(t^{-4}\right)\\
&=\QQ_{2n+2} -\frac{2(n+1)\la \QQ_{2n}+n(n+1)\QQ_{2n-2}}{t^2} +\O\left(t^{-4}\right)
\end{split}\]
which is \eqref{eq:36a} for $n\to n+1$, where we have used the relation \[y\QQ_{2m-1}=\QQ_{2m}+\QQ_{2m-2}.\]

Now we suppose that \eqref{eq:36b} is true. From Lemma~\ref{betaasymp} we have
\[ \b_{2n+2} = \frac{n+1}{t}+\O\left(t^{-3}\right),\] and so
\[ \begin{split}
Q_{2n+3}&={yQ_{2n+2}}-\frac{2\b_{2n+2}}{t} Q_{2n+1}\\
&= y \left\{ \QQ_{2n+2} -\frac{2(n+1)\la \QQ_{2n}+n(n+1)\QQ_{2n-2}}{t^2} +\O\left(t^{-4}\right) \right\}
- \frac{2(n+1)}{t^2}
\left\{{\QQ_{2n+1}} +\O\left(t^{-2}\right) \right\}\\
&=y\QQ_{2n+2} -\frac{2(n+1)\QQ_{2n+1}+2(n+1)\la y\QQ_{2n}+n(n+1)y\QQ_{2n-2}}{t^2} 
+\O\left(t^{-4}\right) \\
&=\QQ_{2n+3} -\frac{2(n+1)(\la+1) \QQ_{2n+1}+n(n+1)\QQ_{2n-1}}{t^2} +\O\left(t^{-4}\right)
\end{split}\]
which is \eqref{eq:36b} for $n\to n+1$, where we have used the relation $y\QQ_{2m}=\QQ_{2m+1}$.
Hence the result follows by induction.
\end{proof}
\comment{\begin{align*}
 {Q_{2n}(y;t)} &={\QQ_{2n}} -\frac{2n\la\QQ_{2n-2}+n(n-1)\QQ_{2n-4}}{t^2}+\O\left(t^{-4}\right),\\
 {Q_{2n+1}(y;t)} &={\QQ_{2n+1}} -\frac{2n(\la+1)\QQ_{2n-1}+n(n-1)\QQ_{2n-3}}{t^2} +\O\left(t^{-4}\right).
 \end{align*}}

\begin{figure}[ht]
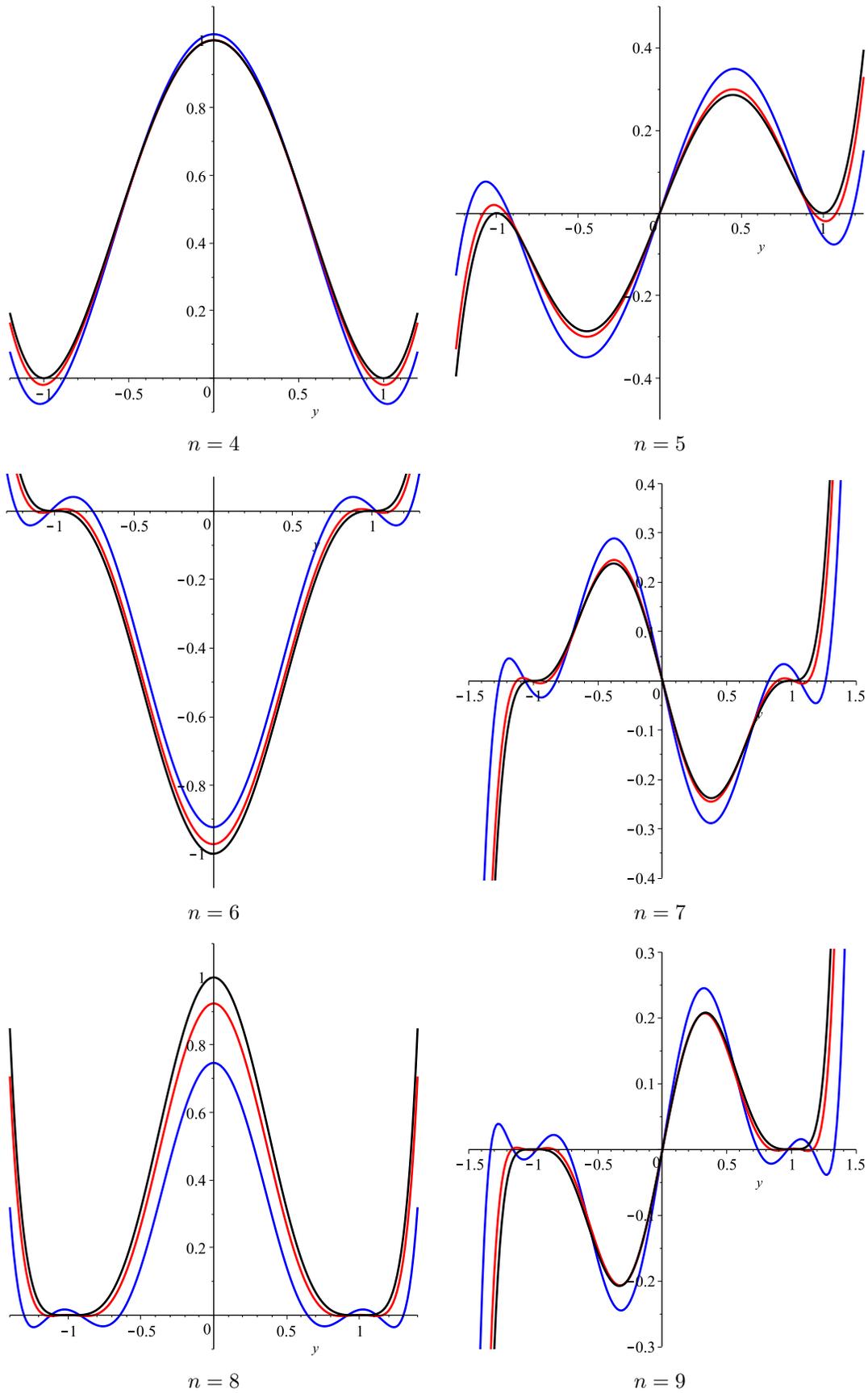

\[\begin{array}{cc}
\fig{GenFreudQQ4} & \fig{GenFreudQQ5} \\
n=4 & n=5 \\[5pt]
\fig{GenFreudQQ6} &\fig{GenFreudQQ7} \\
n=6 & n=7 \\[5pt]
\fig{GenFreudQQ8} & \fig{GenFreudQQ9}\\
n=8 & n=9
\end{array}\]
\caption{Plots of the polynomials $Q_n(y;5,\tfrac12)$ (blue), $Q_n(y;10,\tfrac12)$ (red) and $\QQ_n(y)$ (black) for $n=4,5,\ldots,9$.}
\end{figure}

\comment{Consequently we have the following result
\begin{theorem}{\label{thm:Sn}Suppose that the monic polynomials $\SS_n(x;t)$ are given by \eqref{def:Snn} and the monic polynomials $S_n(x;t)$ are generated by the three-term recurrence relation (\ref{eq:gfrr}). Then
in the limit as $t\to\infty$
\begin{subequations}
\begin{align*}
S_{2n}(x;t)&\to (x^2-\tfrac12t)^n=\SS_{2n}(x;t),\\ S_{2n+1}(x;t)&\to x(x^2-\tfrac12t)^n=\SS_{2n+1}(x;t).
\end{align*}
\end{subequations}
}\end{theorem}}

\begin{remark}{\label{thm:Sn}Suppose that the monic polynomials $\SS_n(x;t)$ are given by \eqref{def:Snn} and the monic polynomials $S_n(x;t)$ are generated by the three-term recurrence relation (\ref{eq:gfrr}). Then formally, as $t\to \infty$,

\begin{subequations}
\begin{align*}
S_{2n}(x;t)&\to (x^2-\tfrac12t)^n=\SS_{2n}(x;t),\\ S_{2n+1}(x;t)&\to x(x^2-\tfrac12t)^n=\SS_{2n+1}(x;t).
\end{align*}
\end{subequations}
In other words, if the positive zeros of $S_{n}(x,t)$ are denoted by $ x_{n,k}(\lambda,t)$ for $k=1,2,\dots, \lfloor n/2\rfloor $, where $\lfloor m\rfloor $ is the largest integer smaller than $m$, we have, 
$$\lim_{t\to\infty}x_{n,k}(\lambda,t)=\tfrac12{t},\qquad k=1,2,\dots,\lfloor n/2\rfloor .$$ 
Since the zeros are symmetric with respect to the origin, the negative zeros of $S_n(x;t)$ approach $-\tfrac{1}{2}t$ in the limit as $t\to\infty$. 
}\end{remark}

\section{Asymptotic properties of generalized Freud Polynomials as \boldmath{$n\to\infty$}}\label{sec4}
\subsection{Asymptotics for the recurrence coefficient $\b_n(t;\la)$ as $n\to\infty$}
The asymptotic expansion of $\b_n(t;\la)$ in \eqref{eq:gfrr} satisfying \eqref{nonlineardiffer} for the special case when $t=0$ and $\la=-\frac 12$ was studied by Lew and Quarles \cite{refLewQuarles}, see also \cite{refNevai1983, refNP}. 
The asymptotics for the more general case when $t\in\R$ and $\la=-\frac 12$ were given by Clarke and Shizgal \cite{refClarkeShizgal} in the context of bimode polynomials. 
In the next theorem we provide the asymptotic expansion of $\b_n(t;\la)$ in \eqref{eq:gfrr} as $n\to\infty$, for $t,\la\in\R$.

\begin{theorem}\label{thm41}
Let $t,\la \in \R$, then as $n\to\infty$, the recurrence coefficient $\b_n$ associated with monic generalized Freud polynomials
satisfying the nonlinear discrete equation \eqref{nonlineardiffer}, i.e.
\begin{equation*}
 \b_{n} \left( \b_{n+1} + \b_{n} + \b_{n-1} -\tfrac12{t} \right)= \tfrac14[n+(2\lambda +1)\Delta_{n}],
 \end{equation*}
where $\Delta_n=\tfrac12[1-(-1)^n]$, has the asymptotic expansion
\begin{align*}
\b_{n}(t;\la) &= \frac{\sqrt{3}\,n^{1/2}}{6}\left\{1+\frac{\sqrt{3}\,t}{6n^{1/2}}+\frac{t^2+12(2\la+1)\Delta_{n}}{24n} \right.\\
&\quad\left.
-\ \frac{t^4+24(2\la+1)\Delta_{n}t^2 + 48[3(2\la+1)^2\Delta_{n}^2-1]}{1152n^2} 
+ \mathcal{O}(n^{-5/2})\right\},\end{align*}
i.e.
\comment{\begin{subequations}\label{bnasymp}
\begin{align}
\b_{n}(t;\la) = \frac{\sqrt{3}\,n^{1/2}}{6}\left\{1+\frac{\sqrt{3}\,t}{6n^{1/2}}+\frac{t^2}{24n}-\frac{t^4-48}{1152n^2}
+\frac{\sqrt{3}\,t}{144n^{5/2}}+\mathcal{O}(n^{-3})\right\},
\end{align} when $n$ is even and
\begin{align}\nonumber
\b_{n}(t;\la) = \frac{\sqrt{3}\,n^{1/2}}{6}&\left\{1+\frac{\sqrt{3}\,t}{6n^{1/2}}+\frac{t^2+12(2\la+1)}{24n}
- \frac{t^4+24(2\la+1)t^2 + 96 (6 \la^2 + 6\la+1)}{1152n^2} \right.\\
&\quad\left. +\ \frac{\sqrt{3}\,t}{144n^{5/2}}+\mathcal{O}(n^{-3})\right\}\end{align} when $n$ is odd.
\end{subequations}}
\begin{subequations}\label{bnasymp}
\begin{align}
\b_{2n}(t;\la) &= \frac{\sqrt{6}\,n^{1/2}}{6}\left\{1+\frac{\sqrt{6}\,t}{12n^{1/2}}+\frac{t^2}{48n}-\frac{t^4-48}{4608n^2}
+\mathcal{O}(n^{-5/2})\right\},\\ \nonumber
\b_{2n+1}(t;\la) &= \frac{\sqrt{3}\,(2n+1)^{1/2}}{6}\left\{1+\frac{\sqrt{3}\,t}{6(2n+1)^{1/2}}+\frac{t^2+12(2\la+1)}{24(2n+1)}\right.\\
&\left.\quad - \frac{t^4+24(2\la+1)t^2 + 96 (6 \la^2 + 6\la+1)}{1152(2n+1)^2} +\mathcal{O}(n^{-5/2})
\right\}.\end{align}
\end{subequations}
\end{theorem}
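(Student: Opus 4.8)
The plan is to pin down the coefficients of the expansion by substituting a formal series ansatz into the discrete equation \eqref{nonlineardiffer} and matching orders in $n^{-1/2}$, after first securing that such an expansion genuinely exists. Concretely, I would posit
\[ \b_n(t;\la)\sim n^{1/2}\sum_{k\ge0}c_k\,n^{-k/2}, \]
where each coefficient $c_k=c_k(\Delta_n;t,\la)$ is allowed to depend on the parity of $n$ through $\Delta_n\in\{0,1\}$ (this parity dependence is genuinely needed, as the stated expansion first sees $\Delta_n$ in the coefficient of $n^{-1/2}$ inside the braces). To upgrade the formal expansion to a true asymptotic one, I would view \eqref{nonlineardiffer} as a smooth two-component recurrence for the pair $(\b_{2n},\b_{2n+1})$ and invoke the result of Bauldry, M\'at\'e, and Nevai \cite{refBauldryMN88}: the associated Jacobian satisfies the required non-unimodularity condition, so the unique positive solution—whose existence for all $t\in\R$ is established in \S\ref{sec5}—is approximated to all orders by its formal expansion.

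Next I would substitute the ansatz into
\[ \b_n\bigl(\b_{n+1}+\b_n+\b_{n-1}-\tfrac12t\bigr)=\tfrac14\bigl[n+(2\la+1)\Delta_n\bigr], \]
expanding each shifted term $\b_{n\pm1}$ by two mechanisms that both act at every order: the binomial expansion of the index shift,
\[ (n\pm1)^{(1-k)/2}=n^{(1-k)/2}\bigl[1\pm\tfrac{1-k}{2}n^{-1}+\O(n^{-2})\bigr], \]
and the parity flip $\Delta_{n\pm1}=1-\Delta_n$ applied inside each $c_k$. I would hold the identity $\Delta_n^2=\Delta_n$ in reserve and use it only at the very end to compress the quadratic $\Delta_n$-terms; this is exactly what produces the $\Delta_n^2$ appearing in the stated $\O(n^{-2})$ coefficient and lets it collapse to $t^4-48$ in the even case and $96(6\la^2+6\la+1)$ in the odd case.

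Matching powers of $n$ then yields a triangular system for the $c_k$. The top order $n$ gives $3c_0^2=\tfrac14$, hence $c_0=\tfrac{\sqrt3}{6}$, independently of parity and in agreement with Freud's conjecture \eqref{Freudconj} for $m=4$. At each lower order the new coefficient satisfies a linear equation; because the parity flip couples $c_k(\Delta_n)$ to $c_k(1-\Delta_n)$, this is really a $2\times2$ system for the two parity values, whose matrix is invertible precisely by the non-unimodularity condition, so the coefficients are uniquely determined order by order. Solving successively for $c_1,\dots,c_4$ reproduces the bracketed expansion; a convenient internal check is that the $n^{-3/2}$ coefficient inside the braces must vanish. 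Finally the split forms \eqref{bnasymp} follow by setting $\Delta_n=0$ with $n\to2n$ and $\Delta_n=1$ with $n\to2n+1$ and re-expanding the powers $(2n)^{-k/2}$ and $(2n+1)^{-k/2}$, which is what converts the $\tfrac{\sqrt3}{6}$ prefactor into $\tfrac{\sqrt6}{6}$ in the even case.

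The main obstacle will be the simultaneous bookkeeping of the two shift mechanisms—index shift and parity flip—whose cross-terms first interfere at order $n^{-1}$ and are the natural place for sign and factor errors. A secondary difficulty is organising the $\O(n^{-2})$ computation so that the $\Delta_n$-linear and $\Delta_n$-quadratic contributions are assembled correctly before the reduction $\Delta_n^2=\Delta_n$ is applied; invoking that identity prematurely would destroy the clean order-by-order structure of the $2\times2$ linear systems and make the induction much harder to control.
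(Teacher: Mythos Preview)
Your approach is essentially the same as the paper's: posit a formal series $\b_n\sim \tfrac{\sqrt3}{6}n^{1/2}\sum_{k\ge0}b_k n^{-k/2}$, expand $\b_{n\pm1}$ by the binomial index shift, substitute into \eqref{nonlineardiffer}, and solve the resulting triangular system for $b_1,\dots,b_4$ (the paper likewise finds $b_3=0$, matching your internal check).

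The two points of genuine divergence are in the bookkeeping and the justification. On the bookkeeping, the paper does \emph{not} implement the parity flip $\Delta_{n\pm1}=1-\Delta_n$ in the shifted terms; it simply writes $\b_{n\pm1}=\tfrac{\sqrt3}{6}(n\pm1)^{1/2}\sum b_k(n\pm1)^{-k/2}$ with the \emph{same} $b_k$, treating $\Delta_n$ as a fixed label carried by the right-hand side. Your more careful two-component treatment is cleaner and is what one would need to push the expansion to higher order without ambiguity. On the justification, the paper does not appeal to \cite{refBauldryMN88} or to \S\ref{sec5}; instead it extracts directly from \eqref{nonlineardiffer} the elementary a~priori bound
\[
0<\b_n<\tfrac14t+\tfrac12\sqrt{n+\tfrac14t^2+(2\la+1)\Delta_n},
\]
obtained by dropping the positive term $\b_n(\b_{n+1}+\b_{n-1})$, and declares this completes the proof. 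That bound confirms the leading order $\tfrac12 n^{1/2}$ but does not by itself upgrade the formal series to a rigorous asymptotic expansion; your route through the non-unimodularity criterion of \cite{refBauldryMN88} is the more honest way to close that gap, though you would still need to verify that the Jacobian of the two-component map indeed has no eigenvalues on the unit circle.
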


\begin{proof} We begin by finding the first term in the asymptotic expansion.
The recurrence coefficient $\b_n$ associated with \eqref{genFreud} is positive and diverges as $n\to\infty$ (cf.~\cite{refCJK}) which suggests that as $n\to\infty$ \begin{equation}\label{basic2}
\b_n \sim B n^{r},
\end{equation}
with $r>0$ and $B$ a constant. Substituting the asymptotic form \eqref{basic2} into \eqref{nonlineardiffer} we obtain
\begin{equation*}
 3B^2n^{2r}-\tfrac12tBn^r\sim \tfrac14[n+ (2\lambda +1)\Delta_{n}].
 \end{equation*}
Since we require this equation to hold for all $n=1,2,\dots$, it follows that $r=\frac 12$, $B=\tfrac16\sqrt{3}$ and so the leading behaviour is given by \[\b_n\sim \tfrac16 \sqrt{3}\, n^{1/2}.\]

Next we suppose that as $n\to\infty$
\begin{equation}\label{bnexp}\b_n=\frac{\sqrt{3}\,n^{1/2}}{6}\sum_{k=0}^{\infty}\frac{b_k}{n^{k/2}},\end{equation}
with 
$b_0=1$. Substituting this together with
\begin{align*}\b_{n\pm1}&=\frac{\sqrt{3}\,(n\pm1)^{1/2}}{6}\sum_{k=0}^{\infty}\frac{b_k}{(n\pm1)^{k/2}}\\
&=\frac{\sqrt{3}\,n^{1/2}}{6}\left\{1+\frac{b_1}{n^{1/2}}+\frac{2b_2\pm1}{2n} +\frac{b_3}{n^{3/2}}
+\frac{8b_4\mp 4b_2-1}{8n^2} +\O(n^{-5/2}) \right\},
\end{align*}
as $n\to\infty$, which are obtained by letting $n\to n\pm1$ in \eqref{bnexp} and doing an asymptotic expansion,
into \eqref{nonlineardiffer}, doing an asymptotic expansion 
and equating powers of $n$ gives
\begin{subequations}\begin{align*}
b_1&-\tfrac16\sqrt{3}\,t=0,\\
b_2&+\tfrac12b_1^{2}-\tfrac16\sqrt{3}\,tb_1-\tfrac12\Delta_n=0,\\
b_3&+b_1b_2-\tfrac16\sqrt{3}\,tb_2=0,\\
b_4&+b_1b_3+\tfrac12b_2^{2}-\tfrac16\sqrt{3}\,tb_3-\tfrac1{24}=0,
\end{align*}\end{subequations}
which have solution
\begin{equation*}\begin{split}
b_1&=\frac{\sqrt {3}t}{6},\quad b_2=\frac{{t}^{2}+12\Delta_n}{24},\quad b_3=0,
\quad b_4=\frac{t^{4}+24\Delta_nt^2+48(3\Delta_n^2-3)}{1152},
\end{split}\end{equation*}
Hence, letting $\Delta_n=\tfrac12[1-(-1)^n]$, we formally obtain the asymptotic expansions \eqref{bnasymp}. 
From the nonlinear discrete equation \eqref{nonlineardiffer} we obtain
\[ \b_n^2-\tfrac12t\b_n+\b_{n} \left( \b_{n+1} + \b_{n-1} \right)= \tfrac14[n+(2\la +1)\Delta_{n}],\]
and so
\[ \b_n^2-\tfrac12t\b_n< \tfrac14[n+(2\la +1)\Delta_{n}],\]
since $\b_{n} \left( \b_{n+1} + \b_{n-1} \right)>0$, as $\b_n=h_n/h_{n-1}>0$, where
\[ h_n=\int_{-\infty}^{\infty}S_n^2(x;t)w_{\la}(x;t)\,dx.\]
 Therefore
\begin{align} 0<\b_n &<\tfrac14t + \tfrac12\sqrt{n+\tfrac14t^2+(2\la +1)\Delta_{n}}\label{bninequal}\\
&=\tfrac14t + \tfrac12n^{1/2}\left[1+ \frac{\tfrac14t^2+ (2\la +1)\Delta_{n}}{n} \right]^{1/2}\nonumber\\
&= \tfrac12n^{1/2}+ \tfrac14t +\frac{\tfrac14t^2+ (2\la +1)\Delta_{n}}{2n^{1/2}} +\sum_{m=2}^\infty a_m \frac{\left[\tfrac14t^2+ (2\la +1)\Delta_{n}\right]^m}{n^{m-1/2}},\nonumber
\end{align}
where\[ a_m= \frac{(-1)^{m+1}(2m-3)!}{2^{2m-1}\,m!\,(m-2)!},\]
which completes the proof.
\comment{Substituting \eqref{exp} into \eqref{nonlineardiffer}, we obtain \begin{align} (n+1)^{1/2}\hb_{n+1}&\left[(n+2)^{1/2}\hb_{n+2}+(n+1)^{1/2}\hb_{n+1}+n^{1/2}\hb_{n} -\tfrac12t \right]\nonumber \\
&\qquad= \tfrac14\left[n+1+(2\lambda +1)\Delta_{n+1}\right].\label{maineqA}\end{align}
Dividing both sides of \eqref{maineqA} by $n$ yields
\begin{align*}\left(1+\frac1n\right)^{1/2}\hb_{n+1}&\left[\left(1+\frac2n\right)^{1/2}\hb_{n+2}+\left(1+\frac1n\right)^{1/2}\hb_{n+1}+\hb_{n} -\frac{t}{2n^{1/2}} \right]\\&= \tfrac14\left[1+\frac{1+(2\lambda +1)\Delta_{n+1}}{n}\right]\end{align*}
and so 
\begin{align}(1+\ep^2)^{1/2}\hb_{n+1}&\left[(1+2\ep^2)^{1/2}\hb_{n+2}+(1+\ep^2)^{1/2}\,\hb_{n+1}+\hb_{n} -\tfrac12t\ep \right]\nonumber\\
&\qquad= \tfrac14\left[1+\ep^2+\ep^2(2\lambda +1)\Delta_{n+1}\right].\label{he}\end{align}
In order to evaluate the coefficients $b_k$, $k=0,1,\dots 5$, in the asymptotic series \eqref{hb} for $\hb_n$ we note that
\begin{subequations}\label{pe}
\begin{align}\hb_n & = A= b_0+b_1\e+b_2\ep^2+b_3\ep^3+b_4\ep^4+b_5\ep^5+\O(\ep^6),\\
\hb_{n+1}&=\sum_{k=0}^{\infty}b_k\left(\frac{\ep}{(1+\ep^2)^{1/2}}\right)^k=\frac{B}{(1+\ep^2)^{5/2}}\\
\hb_{n+2}&=\sum_{k=0}^{\infty}b_k\left(\frac{\ep}{(1+2\ep^2)^{1/2}}\right)^k=\frac{C}{(1+2\ep^2)^{5/2}}
\end{align}\end{subequations}
 where 
\begin{align*}B =b_0&+b_1\ep+(\tfrac{5}{2}b_0+b_2)\ep^2+(2 b_1+b_3)\ep^3+(\tfrac{15 }{8}b_0+\tfrac{3}{2}b_2+b_4)\ep^4
\\&+ (b_1+b_3+b_5)\ep^5+\O(\ep^6),\\
C=b_0&+b_1\ep+({5}b_0+b_2)\ep^2+(4 b_1+b_3)\ep^3+(\tfrac{15}{2}b_0+3 b_2+b_4)\ep^4\\&+
(4 b_1+2 b_3+b_5)\ep^5+\O(\ep^6),
\end{align*}
 are obtained using the series expansions of $\hb_{n+1}$ and $\hb_{n+2}$, the binomial expansions of $(1+\ep^2)^\ell$ and $(1+2\ep^2)^\ell$ for $\ell=\frac12,\,\frac32,~\frac52,\dots$, in powers of $\ep$ and keeping only the terms of order $\ep^5$ or lower.
Substituting the expressions \eqref{pe} into \eqref{he} we obtain
\begin{align}\label{cases}\nonumber &B\left[C(1+\ep^2)^2+B(1+2\ep^2)^2+(1+2\ep^2)^2(1+\ep^2)^2(A-\tfrac12{t\e})\right]\\\nonumber
&\qquad =\tfrac14[1+\ep^2+\ep^2(2\lambda +1)\Delta_{n+1}](1+\ep^2)^4(1+2\ep^2)^2\\
&\qquad =\begin{cases} \tfrac14(1+\ep^2)^5(1+2\ep^2)^2,\quad
&\mbox{if}~n~\mbox{is odd},\\[2.5pt]
\tfrac14[1+2\ep^2(\lambda +1)](1+\ep^2)^4(1+2\ep^2)^2, &\mbox{if}~n~\mbox{is even},
\end{cases}\end{align}
Expanding all terms in \eqref{cases} in powers of $\ep$, retaining only terms of order $\ep^5$ or lower, yields
\begin{align*}
\ep^5 &\left(-\tfrac{239}{16} b_0 t-\tfrac12{b_4 t}-\tfrac{15}{4} b_2 t+179 b_1 b_0+45 b_3 b_0+6 b_5 b_0+45 b_1 b_2+6 b_2 b_3+6 b_1 b_4\right)\\
&\quad+\ep^4 \left(-4 b_1 t-\tfrac12{b_3 t}+\tfrac{407}{4} b_0^2+48 b_2 b_0+6 b_4 b_0+24 b_1^2+3 b_2^2+6 b_1 b_3\right)\\
&\quad+\ep^3 \left(-\tfrac{17}{4} b_0 t-\tfrac12{b_2 t}+51 b_1 b_0+6 b_3 b_0+6 b_1 b_2\right)\\
&\quad+\ep^2 \left(-\tfrac12{b_1 t}+27 b_0^2+6 b_2 b_0+3 b_1^2\right) 
+\ep \left(6 b_0 b_1-\tfrac12{b_0 t}\right)+3 b_0^2\\
&\qquad=\begin{cases} \frac{17}{2}\ep^4+\frac{9}{4}\ep^2+\frac{1}{4},\quad
&\mbox{if}~n~\mbox{is odd},\\[2.5pt]
\left(4\la +\frac{21}{2}\right) \ep^4+\left(\tfrac12{\la}+\frac{5}{2}\right) \ep^2+\frac{1}{4}, &\mbox{if}~n~\mbox{is even}.
\end{cases}\end{align*} Equating the coefficients of $\e$ on both sides of this equation yields the coefficients $b_k$, $k=1,2,\dots,5$ in the asymptotic expansion of \eqref{exp} where $\hb_n$ is given by \eqref{hb}.}
\end{proof}

\begin{corollary}\label{keyy}
Assume that $\b_n(t;\la)$ satisfies \eqref{nonlineardiffer}. Then, for $t,\la \in \R$:
\begin{itemize}
\item[(i)] the sequence $\left\{\ds{\frac{\b_n(t;\la)}{\sqrt n}}\right\}_{n=1}^{\infty}$ is bounded;
\item[(ii)] $\ds \lim_{n\rightarrow \infty} \frac{\b_n(t;\la)}{\sqrt n} = \frac{\sqrt{3}}{6}$.
\end{itemize}
\end{corollary}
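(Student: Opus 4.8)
The plan is to handle the two assertions by different means: part (i) follows directly from the a~priori bound derived inside the proof of Theorem~\ref{thm41}, while part (ii) requires a $\limsup$/$\liminf$ analysis of the discrete equation \eqref{nonlineardiffer} itself. For (i) I would begin from the inequality \eqref{bninequal}, namely
\[ 0<\b_n<\tfrac14 t+\tfrac12\sqrt{n+\tfrac14t^2+(2\la+1)\Delta_n}, \]
and simply divide by $\sqrt n$ to obtain
\[ 0<\frac{\b_n}{\sqrt n}<\frac{t}{4\sqrt n}+\frac12\sqrt{1+\frac{\tfrac14t^2+(2\la+1)\Delta_n}{n}}. \]
The right-hand side is finite for every $n\ge1$ and tends to $\tfrac12$ as $n\to\infty$, hence is bounded; together with positivity of $\b_n$ this shows $\{\b_n/\sqrt n\}$ is bounded.

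For (ii) I would set $c_n=\b_n/\sqrt n$, so that by (i) the numbers $\ell:=\liminf_{n\to\infty}c_n$ and $L:=\limsup_{n\to\infty}c_n$ satisfy $0\le\ell\le L<\infty$. Dividing \eqref{nonlineardiffer} by $n$ rewrites it as
\[ c_n\!\left[c_{n+1}\sqrt{1+\tfrac1n}+c_n+c_{n-1}\sqrt{1-\tfrac1n}-\frac{t}{2\sqrt n}\right]=\frac14\!\left[1+\frac{(2\la+1)\Delta_n}{n}\right]. \]
Since the bracket is bounded (all $c_m$ are), one first notes $\ell>0$: otherwise the left-hand side would vanish along a subsequence while the right-hand side tends to $\tfrac14$. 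Choosing a subsequence $n_k$ with $c_{n_k}\to L$, the bracket must then converge to $1/(4L)$, and as $\sqrt{1\pm1/n_k}\to1$ this forces $c_{n_k+1}+c_{n_k-1}\to\frac{1}{4L}-L$; each summand has $\liminf\ge\ell$, so $\frac{1}{4L}-L\ge2\ell$. The mirror-image argument along a subsequence with $c_{m_k}\to\ell$, using $\limsup\le L$, gives $\frac{1}{4\ell}-\ell\le2L$.

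The crux is to close the gap between $\ell$ and $L$. Clearing denominators turns the two inequalities into $1-4L^2\ge8L\ell$ and $1-4\ell^2\le8L\ell$, so that $1-4\ell^2\le1-4L^2$, i.e. $L\le\ell$; combined with $\ell\le L$ this yields $\ell=L=:c$. Finally the bracket relation collapses to $\frac{1}{4c}=3c$, whence $c^2=\tfrac1{12}$ and $c=\tfrac{1}{2\sqrt3}=\tfrac{\sqrt3}{6}$, which is (ii). I expect this last step to be the main obstacle: because of the three-term coupling, the naive one-sided estimates deliver only $\ell\le\tfrac{\sqrt3}{6}\le L$, and it is the cross-comparison of the two extremal subsequences that is needed to force $\ell=L$. (One could instead read (ii) off the leading term of the expansion in Theorem~\ref{thm41}, but the argument above has the merit of being self-contained.)
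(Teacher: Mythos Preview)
Your proof is correct. For part (i) you invoke exactly the inequality \eqref{bninequal} derived inside the proof of Theorem~\ref{thm41}, which is also how the paper intends the boundedness to follow.

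For part (ii) your route differs from the paper's. The paper treats the corollary as an immediate consequence of Theorem~\ref{thm41}: once the asymptotic expansion $\b_n=\tfrac{\sqrt3}{6}n^{1/2}\{1+\O(n^{-1/2})\}$ is accepted, (ii) is just its leading term. You instead give a self-contained $\limsup/\liminf$ argument directly from the discrete equation \eqref{nonlineardiffer}, showing first that $\ell>0$ and then obtaining the two-sided estimates $1-4L^2\ge 8L\ell$ and $1-4\ell^2\le 8L\ell$, whose concatenation forces $L\le\ell$ and hence $\ell=L=\tfrac{\sqrt3}{6}$. This is essentially the classical Freud--Nevai--Lew--Quarles technique for such nonlinear recurrences. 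The advantage of your approach is that it does not rely on the ansatz $\b_n\sim Bn^{r}$ that the paper uses to launch the expansion in Theorem~\ref{thm41}; it is fully rigorous on its own. The paper's approach, in exchange, yields the sharper information of the full asymptotic series. The small technical points (that the bracket is bounded so $c_{n_k}\to0$ would force the left side to $0$; that $c_{n_k\pm1}(\sqrt{1\pm1/n_k}-1)\to0$ by boundedness; that $\liminf_k c_{n_k\pm1}\ge\ell$ because any subsequential limit of $\{c_n\}$ lies in $[\ell,L]$) are all handled correctly.
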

\begin{remarks}
\begin{enumerate}\item[]
\item Bleher and Its \cite{refBleherIts99} studied the limit of the recurrence coefficient $R_n$ as $n$, $N\to\infty$ when the ratio $n/N$ tends to a positive constant, for
the polynomials $P_n(x)$ orthogonal with respect to the weight 
\begin{subequations}\label{BI}\begin{equation}\label{BIa} w(x)=\exp\{-NV(x)\},\qquad x\in\mathbb{R},\end{equation}
with
\begin{equation}\label{BIb} V(x)=\tfrac14gx^4+\tfrac12 Tx^2,\qquad g>0,\end{equation}\end{subequations}
satisfying the three-term recurrence relation
\begin{equation*} xP_n(x)=P_{n+1}(x)+R_nP_{n-1}(x),\end{equation*}
where $R_n$ satisfies the Freud equation
\begin{equation} \label{BIfreud} n=N R_n \left[T+g(R_{n+1}+R_n+R_{n-1})\right],\qquad n\geq1.\end{equation}

Consider equation \eqref{nonlineardiffer} with $\la=-\tfrac12$, i.e.
\begin{equation}\label{nonln0}
 \b_{n} \left( \b_{n+1} + \b_{n} + \b_{n-1} -\tfrac12{t} \right)= \tfrac14n.
 \end{equation}
 The relationship between equations \eqref{BIfreud} and \eqref{nonln0} is given by
\begin{equation} \label{transRbeta} R_n=\frac{2\b_n}{\sqrt{gN}},\qquad T=-\sqrt{\frac{g}{N}}\,t.\end{equation}
as is easily verified. 
In \cite{refBleherIts99} it is shown that $R_n$ satisfies the inequality
\begin{equation*} \label{BIRn}
0<R_n<\frac{-T+\sqrt{T^2+4ng/N}}{2g}.
\end{equation*}
Applying the transformation \eqref{transRbeta} to this yields
\begin{equation*}
0<\b_n<\tfrac14t+\tfrac12\sqrt{n+\tfrac14t^2},
\end{equation*}
which is \eqref{bninequal} with $\la=-\tfrac12$. 

\item Nevai \cite{refNevai1973,refNevai1983} and later Freud \cite{refFreud1976} proved that the recurrence coefficient associated with the special case of the symmetric weight \eqref{genFreud} where $\la = -\frac{1}{2}$ and $t=0$ has the same limit as the one in Corollary \ref{keyy} (ii). Corollary \ref{keyy} (ii) therefore proves an extension of Freud's conjecture \eqref{Freudconj} for recurrence coefficients associated with the weight \eqref{conj} to recurrence coefficients satisfying \eqref{nonlineardiffer} associated with the weight \eqref{genFreud} for $m=4$.

\item Recently Joshi and Lustri \cite{refJL15} studied the asymptotic behaviour of the first discrete \p\ equation in the limit as $n\to\infty$. Using an asymptotic series expansion, they identified two types of solutions which are pole-free within some sector of the complex plane containing the positive real axis and used exponential asymptotic techniques to determine Stokes phenomena effects in these solutions.
\end{enumerate}
\end{remarks}

In \cite{refDamelin}, Damelin considers asymptotics of recurrence coefficients associated with weights $|x|^{\rho}\exp\{-Q(x)\}$ where $Q(x)$ is an even polynomial of fixed degree.

\begin{theorem}\label{n-asymp}
For $t,\la\in\R$, the recurrence coefficients $\b_{n}(t;\la)$ in \eqref{nonlineardiffer} satisfy

\begin{subequations}\begin{align}\label{eq415}
\frac{\b_{n+1}(t;\la)}{\b_{n}(t;\la)} &= 1+ \mathcal{O}\left(n^{-1}\right), \quad n \rightarrow \infty,\\
\frac{\b_{n}(t;\la)}{a^{2}_n(t)} &= \tfrac{1}{4}+ \mathcal{O}\left(n^{-1}\right), \quad n \rightarrow \infty.\label{eq415b}
\end{align}\end{subequations}
where
 $a_n$ is the Mhaskar-Rakhmanov-Saff number \cite{refMSaff84,refRakhmanov} which is the unique positive solution of the equation
\[ n= \frac{2}{\pi} \ds \int_{0}^{1} \frac{a_nt Q'(a_{n}t)}{\sqrt{1-t^2}}\,d{t} \]
for $Q(x)= x^{4}-tx^{2}$.

\end{theorem}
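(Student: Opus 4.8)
The plan is to read off both estimates from the explicit large-$n$ asymptotics of $\b_n(t;\la)$ established in Theorem~\ref{thm41} and Corollary~\ref{keyy}, refined by an explicit evaluation of the Mhaskar--Rakhmanov--Saff number $a_n$. The bare limits are already immediate: by Corollary~\ref{keyy}(ii), $\b_n/\sqrt n\to\tfrac16\sqrt3$, whence $\b_{n+1}/\b_n\to1$ at once, and $\b_n/a_n^2$ converges as soon as $a_n$ is evaluated. The actual content of the theorem is therefore the sharp $\O(n^{-1})$ error term, for which the rigorous backbone of \eqref{eq415b} is Damelin's theorem \cite{refDamelin}.

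First I would evaluate $a_n$ in closed form. Substituting $Q'(x)=4x^3-2tx$ into the defining relation and scaling $x=a_n u$, the kernel $(1-u^2)^{-1/2}$ paired against $u^2$ and $u^4$ produces the two Wallis integrals $\tfrac14\pi$ and $\tfrac{3}{16}\pi$, so the transcendental defining equation collapses to a quadratic in $a_n^2$, solvable explicitly. Expanding its positive root as $n\to\infty$ then yields an expansion $a_n^2=c\,n^{1/2}\bigl(1+\gamma\,t\,n^{-1/2}+\O(n^{-1})\bigr)$ with explicit constants $c,\gamma$.

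For \eqref{eq415b} I would divide the expansion \eqref{bnasymp}, namely $\b_n=\tfrac16\sqrt3\,n^{1/2}(1+\tfrac16\sqrt3\,t\,n^{-1/2}+\O(n^{-1}))$, by the expansion of $a_n^2$. The quotient of the leading terms produces the constant $\tfrac14$, and the decisive structural point is that the $\O(n^{-1/2})$ corrections of numerator and denominator carry the same coefficient, so they cancel in the ratio and the first surviving term is $\O(n^{-1})$, which is exactly \eqref{eq415b}. For \eqref{eq415} I would use that, by \eqref{bnasymp}, $\b_n$ and $\b_{n+1}$ share both their leading coefficient $\tfrac16\sqrt3$ and their $n^{-1/2}$ coefficient $\tfrac16\sqrt3\,t$, while the parity-dependent terms enter only at order $n^{-1}$ since $\Delta_{n+1}-\Delta_n=(-1)^n$; hence $\b_{n+1}/\b_n=(1+1/n)^{1/2}\,(1+\O(n^{-1}))=1+\O(n^{-1})$.

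The main obstacle is rigour at the level of the error term rather than the computation: the coefficients $b_2,b_3,b_4$ in Theorem~\ref{thm41} are obtained formally, so the $\O(n^{-1})$ rates are not yet licensed by what precedes. To justify them I would either prove that \eqref{bnexp} is a genuine asymptotic expansion—controlling the remainder uniformly in the parity of $n$ by a fixed-point/continuity argument applied to the difference equation \eqref{nonlineardiffer}, in the spirit of \cite{refBauldryMN88}—or appeal to Damelin \cite{refDamelin} directly, reducing the task to checking that $Q(x)=x^4-tx^2$ satisfies his regularity and growth hypotheses. The genuinely delicate bookkeeping throughout is to treat the even and odd subsequences of $\b_n$ on the same footing, since they obey different recurrences and carry different $\la$-dependent subleading terms.
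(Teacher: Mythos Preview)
The paper's proof of this theorem is a single sentence: it refers the reader to \cite[Thm.~2.1]{refDamelin} and does nothing more. Your proposal is therefore considerably more ambitious than what the paper actually supplies. The fallback you mention at the end---checking that $Q(x)=x^4-tx^2$ meets Damelin's hypotheses and then invoking his result---\emph{is} the paper's entire argument.

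Your explicit evaluation of $a_n^2$ as the positive root of the quadratic $\tfrac32 a_n^4 - t a_n^2 - n = 0$ (which is what the Wallis integrals produce), and the structural observation that the $n^{-1/2}$ corrections in $\b_n$ and in $a_n^2$ should cancel in the ratio, make the content of \eqref{eq415b} transparent in a way the bare citation does not. Likewise your derivation of \eqref{eq415} from the expansion \eqref{bnasymp}, tracking the parity through $\Delta_{n+1}-\Delta_n=(-1)^n$, is more informative than the paper's treatment. You are also right that the obstacle to making this route self-contained is the merely formal status of the expansion in Theorem~\ref{thm41}: the paper's proof of that theorem derives the coefficients $b_k$ by substitution and then proves only the one-sided bound \eqref{bninequal}, which does not by itself license the $\O(n^{-1})$ remainders you need. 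The paper sidesteps this entirely by outsourcing both \eqref{eq415} and \eqref{eq415b} to Damelin, so your two proposed resolutions---either upgrading \eqref{bnexp} to a genuine asymptotic expansion via a Bauldry--M\'at\'e--Nevai style argument, or citing Damelin---are exactly the available options, and the paper takes the second.
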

\begin{proof}A proof of \eqref{eq415} and \eqref{eq415b} can be found in \cite[Thm. 2.1]{refDamelin}.
\end{proof}
\subsection{Asymptotics for the generalized Freud polynomials as $n\to\infty$}
Linear second-order differential equations, which are at the heart of much of special function theory, can be used in various ways to obtain asymptotic approximations and inequalities. The differential equation satisfied by generalized Freud polynomials orthogonal with respect to the weight \eqref{genFreud} was obtained in \cite{refCJK}.

\begin{theorem}{\label{thm43A}
Monic orthogonal polynomials $S_{n}(x;t)$ with respect to generalized Freud weight (\ref{genFreud}) satisfy the differential equation
\begin{equation}\label{eq:Snode}
\deriv[2]{S_n}{x}(x;t)+R_n(x;t)\deriv{S_n}{x}(x;t)+T_n(x;t)S_n(x;t)=0,
\end{equation}
where
\begin{subequations}\label{dee}\begin{align}   
R_n(x;t) 
&= -4x^3+2tx+\frac{2\la+1}{x}-\frac{2x}{x^2-\tfrac12t+\b_n+\b_{n+1}},\\[5pt]
T_n(x;t) 
&= 4nx^2+4\b_n+16\b_n(\b_n+\b_{n+1}-\tfrac12t)(\b_n+\b_{n-1}-\tfrac12t)\nonumber\\
&\qquad-\frac{8\b_n x^2+(2\la+1)[1-(-1)^n]}{x^2-\tfrac12t+\b_n+\b_{n+1}}+4(2\la +1)(-1)^n\b_n\nonumber\\
&\qquad+(2\la+1)[1-(-1)^n]\left(t-\frac{1}{2x^2}\right).
\end{align}\end{subequations}
}\end{theorem}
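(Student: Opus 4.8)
The plan is to use the ladder-operator (structure-relation) method for semiclassical orthogonal polynomials, exactly as the Pearson equation for $w_{\la}(x;t)$ invites. Write $w_{\la}(x;t)=\exp\{-v(x)\}$ with $v(x)=x^4-tx^2-(2\la+1)\ln|x|$, so that $v'(x)=4x^3-2tx-(2\la+1)/x$, and record the divided-difference kernel
\[
\frac{v'(x)-v'(y)}{x-y}=4(x^2+xy+y^2)-2t+\frac{2\la+1}{xy}.
\]
Setting $h_n=\int_{-\infty}^{\infty}S_n^2(y;t)w_{\la}(y;t)\,\d y$ and $\b_n=h_n/h_{n-1}$, I would introduce the ladder coefficients (arguments of $S_n,w_\la$ suppressed)
\[
A_n(x)=\frac{1}{h_n}\int \frac{v'(x)-v'(y)}{x-y}\,S_n^2\,w_{\la}\,\d y,\qquad
B_n(x)=\frac{1}{h_{n-1}}\int \frac{v'(x)-v'(y)}{x-y}\,S_nS_{n-1}\,w_{\la}\,\d y,
\]
which satisfy the lowering relation $S_n'=-B_nS_n+\b_nA_nS_{n-1}$ and the raising relation $S_{n-1}'=(B_n+v')S_{n-1}-A_{n-1}S_n$. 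Eliminating $S_{n-1}$ between these (differentiate the lowering relation and substitute) produces \eqref{eq:Snode} with $R_n=-(v'+A_n'/A_n)$ and $T_n=B_n'-B_nA_n'/A_n-B_n^2-v'B_n+\b_nA_nA_{n-1}$.

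Second, I would evaluate $A_n$ and $B_n$ in closed form using the evenness of $w_\la$, the parity $(-1)^n$ of $S_n$, and the recurrence \eqref{eq:gfrr}. Every odd integrand vanishes; the surviving moments are $\int y^2S_n^2w_\la\,\d y=(\b_n+\b_{n+1})h_n$ and $\int yS_nS_{n-1}w_\la\,\d y=h_n$, giving
\[
A_n(x)=4\big(x^2-\tfrac12t+\b_n+\b_{n+1}\big),\qquad
B_n(x)=4\b_nx+\frac{2\la+1}{x}\,I_n,\quad I_n:=\frac{1}{h_{n-1}}\int \frac{S_nS_{n-1}}{y}\,w_{\la}\,\d y.
\]
Since $A_n'/A_n=2x/(x^2-\tfrac12t+\b_n+\b_{n+1})$, the formula $R_n=-(v'+A_n'/A_n)$ reproduces the stated $R_n(x;t)$ at once.

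The one genuinely delicate point is the evaluation of $I_n$, and I expect this parity/orthogonality argument — rather than the subsequent bookkeeping — to be the crux. Exactly one of $S_n,S_{n-1}$ is odd, so $S_nS_{n-1}/y$ is a true polynomial (no pole at the origin, despite the $|y|^{2\la+1}$ factor in $w_\la$). When $n$ is even, $S_{n-1}/y$ has degree $n-2<n$, and orthogonality of $S_n$ forces $I_n=0$; when $n$ is odd, $S_n/y$ is monic of degree $n-1$ and differs from $S_{n-1}$ by a polynomial of degree $\le n-3$, so $I_n=h_{n-1}/h_{n-1}=1$. Hence $I_n=\Delta_n=\tfrac12[1-(-1)^n]$ and $B_n(x)=4\b_nx+(2\la+1)\Delta_n/x$.

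Finally I would substitute these closed forms into $T_n$ and simplify. The terms $B_n'$ and $-B_nA_n'/A_n$ directly yield the $4\b_n$, the $-(2\la+1)\Delta_n/x^2$, and the rational $-(8\b_nx^2+(2\la+1)[1-(-1)^n])/(x^2-\tfrac12t+\b_n+\b_{n+1})$ contributions. In $\b_nA_nA_{n-1}-B_n^2-v'B_n$ the quartic ($x^4$) terms cancel identically, the $1/x^2$ terms from $B_n^2$ and $v'B_n$ cancel because $\Delta_n^2=\Delta_n$, the constant pieces assemble (using $1-2\Delta_n=(-1)^n$) into $16\b_n(\b_n+\b_{n+1}-\tfrac12t)(\b_n+\b_{n-1}-\tfrac12t)+4(2\la+1)(-1)^n\b_n+(2\la+1)[1-(-1)^n]t$, and the coefficient of $x^2$ reduces from $16\b_n(\b_{n-1}+\b_n+\b_{n+1}-\tfrac12t)-4(2\la+1)\Delta_n$ to $4n$ upon invoking the discrete equation \eqref{nonlineardiffer}, namely $4\b_n(\b_{n-1}+\b_n+\b_{n+1}-\tfrac12t)=n+(2\la+1)\Delta_n$. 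Collecting these terms gives precisely $T_n(x;t)$ as stated; it is this last use of \eqref{nonlineardiffer} that turns the raw ladder expression into the clean formula in the theorem.
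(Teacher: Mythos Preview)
Your proposal is correct. The paper itself gives no proof here---it simply cites \cite[Thm.~6]{refCJK}---and the ladder-operator derivation you outline (computing $A_n$, $B_n$ from the divided-difference kernel of $v'$, exploiting parity to evaluate $I_n=\Delta_n$, and then invoking \eqref{nonlineardiffer} to reduce the $x^2$ coefficient to $4n$) is precisely the standard semiclassical structure-relation argument and is the method used in that reference; your bookkeeping checks out term by term.
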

\begin{proof} See \cite[Thm. 6]{refCJK}.
\end{proof}

\begin{remark} The differential equation \eqref{eq:Snode} for the special case where $\la=-\frac 12$ and $t$ replaced by $2t$ is given in \cite[eqn. (6)]{refAHM2016} though, in their notation, the statement on p.\ 104 needs to be corrected to read
\begin{align*}S_n^t(x)= 4a_n^2&\left[4x^2\left(a_{n-1}^2+a_n^2+a_{n+1}^2-t-\frac{2}{x^2-t+a_n^2+a_{n+1}^2}\right)
+4\left(a_n^2+a_{n+1}^2-t\right)\left(a_{n-1}^2+a_{n}^2 -t\right)+1\right].
\end{align*}
\end{remark}

\comment{The differential equation \eqref{eq:Snode} can be transformed into normal form through the change of the dependent
variable
\begin{equation}\label{defZn}
Z_n(x;t)=S_n(x;t)\exp\left\{ \tfrac12\int^x R_n(s;t)\, ds\right\}.
\end{equation}
Then $Z_n(x;t)$ satisfies the differential equation
\begin{subequations}\label{eqnZn}
\begin{equation}\label{eqnZna}\deriv[2]{Z_n}{x}(x;t)+B_n(x;t)Z_n(x;t)=0,\end{equation}
where
\begin{equation}\label{eqnZnb}
B_n(x;t)=T_n(x;t)-\tfrac14 R_n^2(x;t)-\tfrac12 \deriv{R_n(x;t)}{x}.\end{equation}
\end{subequations}
\begin{theorem}
Let $\{ S_n(x;t) \}_{n=0}^{\infty}$ denote monic polynomials orthogonal with respect to the weight $w_{\la}$ in \eqref{genFreud} and let $Z_n(x;t)$ be given by \eqref{defZn},
which satisfies \eqref{eqnZn} 
with
\begin{align*}
B_{n}(x;t) &=
4 \b_n [(t-2 \b_n-2 \b_{n+1}) (t-2 \b_n-2 \b_{n-1})+(2 \lambda +1) (-1)^n+1]\\&\nonumber
-t[1+(2 \lambda +1)(-1)^n] +4 t x^4-4 x^6+x^2 \left(4 \lambda +4 n-t^2+8\right)\\&\nonumber
-\frac{2 (\lambda +1) \lambda -(2 \lambda +1)(-1)^n+\tfrac12}{2 x^2}\\ \nonumber&+\frac{1-2 x^2(4 \b_n-t+2 x^2)+(2 \lambda +1)(-1)^n}{\b_n+\b_{n+1}-\tfrac12t+x^2}-\frac{3 x^2}{(\b_n+\b_{n+1}-\tfrac12t+x^2)^2}
\end{align*}
\end{theorem}}

Differential systems satisfied by weights \eqref{BIa}, where $V(x)$ is an even polynomial with positive leading coefficient, are discussed by Bertola \textit{et al.} \cite{refBEH03,refBEH06}.

In \cite{refBleherIts99,refBleherIts03}, Bleher and Its discuss semiclassical asymptotics of orthogonal polynomials $P_n(x)$ with respect to the weight \eqref{BI} using a combination of formal semiclassical WKB-type analysis of linear differential and nonlinear discrete equations, and rigorous asymptotics of a Riemann-Hilbert problem together with the nonlinear steepest descent method due to Deift and Zhou \cite{refDZ93,refDZ95}, where the latter technique provides a justification of the former. 
A similar rigorous asymptotic analysis of monic orthogonal polynomials $S_{n}(x;t)$ with respect to the generalized Freud weight (\ref{genFreud}) lies beyond the scope of this paper and we shall not pursue this further here. 

We shall however make some remarks about equation \eqref{eq:Snode} for $n$ large.
Since from Theorem \ref{thm41} we have $\b_n= \tfrac16\sqrt{3n}+\O(1)$ as $n\to\infty$, it follows from \eqref{dee} that
\begin{subequations}
\begin{align} R_n(x;t) &= -4x^3+2tx+\frac{2\la+1}{x}+\O(n^{-1/2}),\\
T_n(x;t) &= 
(\tfrac43n)^{3/2}+\O(n),
\end{align}\end{subequations}
and so we consider the equation
\begin{equation}\label{eqSntilde}
\deriv[2]{\widehat{S}_n}{x} - \left(4x^3-2tx-\frac{2\la+1}{x}\right)\deriv{\widehat{S}_n}{x}+
(\tfrac43n)^{3/2}\widehat{S}_n=0.
\end{equation}
Equation \eqref{eqSntilde} is equivalent to the Biconfluent Heun equation, cf.~\cite[\S31.12]{refNIST}
\begin{equation*}
\deriv[2]{u}{z}-\left(\frac{\gamma}{z}+\delta+z\right)\deriv{u}{z}+\left(\a-\frac{q}{z}\right)u=0,
\end{equation*}
through the transformation
\[ \widehat{S}_n(x;t,\la)=u(z;\a,\gamma,\delta,q),\qquad z=\tfrac12x^2,\]
with parameters
\[ \a=0,\qquad\gamma=-1-\la,\qquad \delta=-\tfrac12\sqrt{2}\,t,\qquad q=-\tfrac19\sqrt{6}\,n^{3/2}.\]
Note that if in equation \eqref{eqSntilde} we make the transformation $\widehat{S}_n(x)=w(\zeta)$, with $\zeta= (\tfrac43n)^{3/4}x$, then in the limit as $n\to\infty$ we obtain
\[ \deriv[2]{w}{\zeta} +\frac{2\la+1}{\zeta}\deriv{w}{\zeta}+w=0,\]
which has solution
\[ w(\zeta)=\left\{c_1J_{\la}(\zeta)+c_2J_{-\la}(\zeta)\right\}\zeta^{-\la},\]
with $J_{\la}(\zeta)$ the Bessel function. This suggests that there might be Mehler-Heine type asymptotic formulae for the polynomials ${S}_n(x;t)$ as $n\to\infty$, though we shall not investigate this further here. 

\section{Existence and uniqueness of positive solutions}\label{sec5}
 A natural question to ask is whether \eqref{nonlineardiffer} has many real solutions satisfying the initial condition $\b_0=0$. Several papers, including \cite{refLewQuarles,refNevai1983,refVanAssche07} provide an answer for the case where $t=0$. In a recent paper by Alsulami \textit{et al.}\ \cite{refANSVA15}, existence and uniqueness of a positive solution are discussed for the nonlinear second-order difference equations of the type
\begin{align}\label{maineq}
\b_n \left(\sigma_{n,1}\b_{n+1} + \sigma_{n,0}\b_{n} + \sigma_{n,-1}\b_{n-1}\right)+\kappa_n\b_n= \ell_n
\end{align}
with initial conditions $\b_0\in \R$, $\b_1\geq0$, $\kappa_n\in\R$ and mild conditions on the coefficients $\sigma_{n,j}$, $j=-1,0,1$. An excellent historical overview of the problem and its solution is given.

\begin{theorem}\label{Th48}
For $\la>-1$ and $\b_0 = 0$, there exists a unique $\b_1(t;\la)>0$ such that $\{\b_n(t;\la)\}_{n\in\N}$ defined by the nonlinear discrete equation
\begin{equation}\label{Th48eqn} \b_{n} \left( \b_{n+1} + \b_{n} + \b_{n-1} -\tfrac12{t} \right)= \tfrac14[n+(2\lambda +1)\Delta_{n}],
\end{equation}
with $\Delta_n=\tfrac12[1-(-1)^n]$, is a positive sequence and the solution arises when $\b_1(t;\la)$
is given by \eqref{beta1}, i.e.
\begin{equation}\label{beta1uni}\b_1(t;\la)=\Ph(t)
=\tfrac12t+\tfrac12\sqrt{2}\,\frac{\WhitD{-\la}\big(-\tfrac12\sqrt2\,t\big)}{\WhitD{-\la-1}\big(-\tfrac12\sqrt2\,t\big)}.\end{equation}
\end{theorem}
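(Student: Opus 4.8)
The plan is to split Theorem~\ref{Th48} into an existence-plus-identification half and a uniqueness half, exploiting the fact that the orthogonal-polynomial construction already hands us one positive solution. For existence, recall that the recurrence coefficients $\b_n(t;\la)$ attached to the weight \eqref{genFreud} satisfy \eqref{nonlineardiffer}, which is precisely \eqref{Th48eqn}; they are positive because $\b_n=h_n/h_{n-1}$ with $h_n=\int_{-\infty}^{\infty}S_n^2(x;t)\,w_{\la}(x;t)\,dx>0$; and $\b_1=\mu_2/\mu_0=\Ph(t)$ by \eqref{beta1}, with $\Ph(t)>0$ since it is a quotient of two positive moment integrals. Thus \eqref{beta1uni} already furnishes a positive sequence with $\b_0=0$, which settles existence and pins down the value of $\b_1$.

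It then remains to show that no other choice of $\b_1>0$ yields an everywhere-positive sequence. First I would cast \eqref{Th48eqn} into the general form \eqref{maineq} studied by Alsulami \emph{et al.}~\cite{refANSVA15}, with
\[ \sigma_{n,1}=\sigma_{n,0}=\sigma_{n,-1}=1,\qquad \kappa_n=-\tfrac12 t,\qquad \ell_n=\tfrac14\bigl[n+(2\la+1)\Delta_n\bigr]. \]
Then I would verify the positivity hypotheses: the coefficients $\sigma_{n,j}$ are positive, and $\ell_n>0$ for every $n\ge1$, since $\ell_n=\tfrac14 n>0$ when $n$ is even while $\ell_n=\tfrac14(n+2\la+1)\ge\tfrac12(\la+1)>0$ when $n$ is odd, using $\la>-1$. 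With these conditions checked, the existence-and-uniqueness theorem of \cite{refANSVA15} guarantees a \emph{unique} positive solution subject to $\b_0=0$. Combined with the existence half, this forces the unique admissible initial value to be $\b_1=\Ph(t)$, which is \eqref{beta1uni}.

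The mechanism underlying that uniqueness result, and the step I expect to be the main obstacle, is a shooting argument: writing $\b_1=s$ and solving \eqref{Th48eqn} forward through $\b_{n+1}=\ell_n/\b_n-\b_n-\b_{n-1}+\tfrac12 t$, one shows that if $s$ is too small some iterate becomes non-positive, while if $s$ is too large the sequence overshoots and again loses positivity, leaving a single separatrix value at which positivity persists. The delicate point is that the dependence of $\b_n(s)$ on $s$ alternates in sign with the parity of $n$ (already visible in $\b_1(s)=s$ increasing and $\b_2(s)=\tfrac{\la+1}{2s}-s+\tfrac12 t$ decreasing), so a naive monotonicity argument does not close. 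Verifying that the precise sign and growth conditions of \cite{refANSVA15} hold for \emph{all} $t\in\R$ — in particular in the large-$|t|$ regime, where $\kappa_n=-\tfrac12 t$ is large and the parity-dependent forcing $\Delta_n$ must be tracked carefully — is where the real care lies.
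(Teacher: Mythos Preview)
Your existence half is fine and in fact cleaner than the paper's: rather than quoting \cite[Thm.~4.1]{refANSVA15}, you exhibit the positive solution directly as the recurrence coefficients of the orthogonal-polynomial system, with $\b_1=\mu_2/\mu_0=\Ph(t)$. That settles existence and the identification \eqref{beta1uni} at once.

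The uniqueness half, however, has a genuine gap. You plan to invoke the uniqueness theorem of \cite{refANSVA15} after checking that $\sigma_{n,j}>0$ and $\ell_n>0$, and you flag ``verifying that the precise sign and growth conditions of \cite{refANSVA15} hold for all $t\in\R$'' as the main obstacle. The problem is that those conditions do \emph{not} hold for $t>0$: the hypothesis of \cite[Thm.~5.2]{refANSVA15} requires $\kappa_n\ge0$, i.e.\ $-\tfrac12t\ge0$, so that result gives uniqueness only for $t\le0$. Your shooting heuristic does not close the case $t>0$ either, precisely because of the sign alternation you yourself note. So as written the proposal proves the theorem only on the half-line $t\le0$.

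The paper handles $t>0$ by a different device. Since $\b_1$ viewed as a function of $t$ satisfies the Riccati equation $\Phi'=-\Phi^2+\tfrac12 t\,\Phi+\tfrac12(\la+1)$, the general candidate for $\b_1$ is the one-parameter family
\[
\widehat{\Phi}_{\la}(t;\th)=\tfrac12 t+\tfrac{\sqrt2}{2}\,
\frac{\cos\th\,D_{-\la}(-\tfrac{\sqrt2}{2}t)-\sin\th\,D_{-\la}(\tfrac{\sqrt2}{2}t)}
{\cos\th\,D_{-\la-1}(-\tfrac{\sqrt2}{2}t)+\sin\th\,D_{-\la-1}(\tfrac{\sqrt2}{2}t)},
\qquad 0\le\th\le\tfrac12\pi,
\]
and the paper uses the known $z\to\pm\infty$ asymptotics of the parabolic cylinder function $D_\nu(z)$ to show that $\widehat{\Phi}_{\la}(t;\th)>0$ for all $t\in\R$ if and only if $\th=0$, which singles out $\b_1=\Ph(t)$. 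To complete your argument along the paper's lines you would need to supply this Riccati/parabolic-cylinder step (or some other mechanism) to cover $t>0$; the appeal to \cite{refANSVA15} alone is not enough.
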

\begin{proof} The nonlinear discrete equation \eqref{Th48eqn} is the special case of \eqref{maineq} with
\[ \sigma_{n,1}= \sigma_{n,0}=\sigma_{n,-1}=1,\qquad \kappa_n=-\tfrac12t.\qquad \ell_n= \tfrac14[n+ (2\la+1) \Delta_n],\]
with $\Delta_n= \tfrac12[1-(-1)^{n}]$. The existence of $\b_1(t;\la)>0$ such that \eqref{Th48eqn} is a positive sequence follows immediately from \cite[Thm.\ 4.1]{refANSVA15}. The uniqueness of solutions of \eqref{maineq} is discussed in \cite[Thm.\ 5.2]{refANSVA15}, though the conditions in the theorem require that $t\leq 0$, $\la>-1$ and $\b_0=0$ in our case.
To show uniqueness for $t\in\R$, consider the nonlinear discrete equation \eqref{Th48eqn} with general initial conditions $\b_0=0$ and
${\b}_1=\widehat{\Phi}_{\la}(t;\th)$, where
\begin{equation}\label{betagenICs}
\widehat{\Phi}_{\la}(t;\th) =\frac{t}{2}+\frac{\sqrt{2}}{2}\left[\frac{\cos(\th)D_{-\la}\big(-\tfrac12\sqrt2\,t\big)
-\sin(\th)D_{-\la}\big(\tfrac12\sqrt2\,t\big)}
{\cos(\th)D_{-\la-1}\big(-\tfrac12\sqrt2\,t\big)+\sin(\th)D_{-\la-1}\big(\tfrac12\sqrt2\,t\big)}\right],\end{equation}
with $0\leq\th\leq\tfrac12\pi$ a parameter; if $\tfrac12\pi<\th<\pi$ then $\b_1$ 
has a pole at a finite value of $t$. Since the parabolic cylinder function $D_{\nu}(z)$ has the following asymptotics as 
$z\to\pm\infty$, cf.~\cite[\S12.9]{refNIST}
\begin{equation}
D_{\nu}(z) = \begin{cases} z^{\nu}\exp(-\tfrac14z^2)\big\{1+\O(z^{-2})\big\},&\mbox{\rm as}\quad z\to \infty,\\[2.5pt]
\ds \frac{\sqrt{2\pi}}{\Gamma(-\nu)}(-z)^{-\nu-1}\exp(\tfrac14z^2)\big\{1+\O(z^{-2})\big\}, &\mbox{\rm as}\quad z\to-\infty,
\end{cases}
\end{equation}
then as $t\to\pm\infty$, $\widehat{\Phi}_{\la}(t;\th)$ has the asymptotics
\begin{subequations}\begin{align*}
\widehat{\Phi}_{\la}(t;0) &= \begin{cases} \tfrac12 t+\O(t^{-1}), & \mbox{\rm as}\quad t\to\infty,\\[2.5pt]
\ds -\frac{\la+1}{t}+\O(t^{-3}),& \mbox{\rm as}\quad t\to-\infty,\end{cases}\\[2.5pt]
\widehat{\Phi}_{\la}(t;\th) &=\tfrac12 t+\O(t^{-1}),\qquad\qquad\! \mbox{\rm as}\quad t\to\pm\infty,
\quad\mbox{\rm if}\quad 0<\th<\tfrac12\pi,\\[2.5pt]
\widehat{\Phi}_{\la}(t;\tfrac12\pi) &=\begin{cases}
\ds -\frac{\la+1}{t}+\O(t^{-3}),& \mbox{\rm as}\quad t\to\infty,\\[2.5pt]
\tfrac12 t+\O(t^{-1}), & \mbox{\rm as}\quad t\to-\infty,\end{cases}
\end{align*}\end {subequations}
\comment{then as $t\to\infty$
\[ \widehat{\Phi}_{\la}(t;\th) = \begin{cases} \tfrac12 t+\O(t^{-1}), & \mbox{\rm if}\quad \th\not=\tfrac12\pi,\\[2.5pt]
\ds -\frac{\la+1}{t}+\O(t^{-3}),& \mbox{\rm if}\quad \th=\tfrac12\pi,
\end{cases}\]
and as $t\to-\infty$
\[ \widehat{\Phi}_{\la}(t;\th) = \begin{cases}\tfrac12 t+\O(t^{-1}), & \mbox{\rm if}\quad \th\not=0,\\[2.5pt]
\ds -\frac{\la+1}{t}+\O(t^{-3}),& \mbox{\rm if}\quad \th=0.
\end{cases}\]}
Consequently $\b_1=\widehat{\Phi}_{\la}(t;\th) >0$ for all $t\in\R$ if and only if $\th=0$, which proves the desired results. This result is illustrated in Figure \ref{fig:beta1} where $\b_1=\widehat{\Phi}_{\la}(t;\th)$, is plotted for various values of $\th$.
\end{proof}
\begin{figure}[ht]
\[\includegraphics[width=10cm]{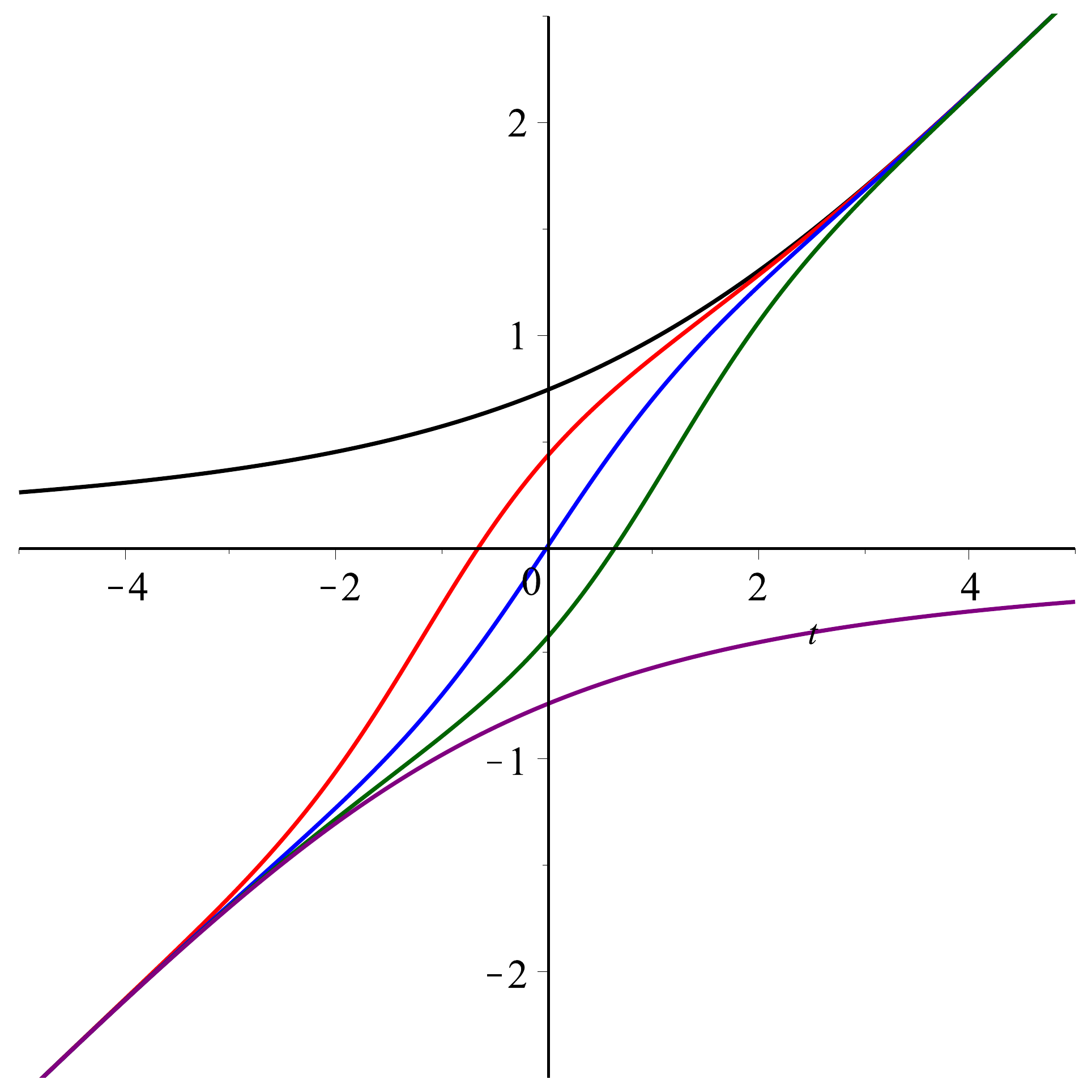}\]
\caption{\label{fig:beta1}
Plots of ${\b}_1=\widehat{\Phi}_{\la}(t;\th)$ as given by \eqref{betagenICs}, with $\la=\tfrac12$,
for $\th=0$ (black), $\th=\tfrac1{12}\pi$ (red), $\th=\tfrac1{4}\pi$ (blue), $\th=\tfrac5{12}\pi$ (green) and $\th=\tfrac1{2}\pi$ (purple).}
\end{figure}

\begin{remarks}
\begin{enumerate}\item[]
\item The rationale for considering $\b_1$ given by \eqref{betagenICs} is that
$\Ph(t)$ given by \eqref{def:Phi} satisfies the Riccati equation
\begin{equation}\label{Req:Phi}
\deriv{\Phi}{t}=-\Phi^2+\tfrac12t\Phi+\tfrac12(\la+1),
\end{equation}
which has general solution $\widehat{\Phi}_{\la}(t;\th)$. 
Letting $\Phi(t)=\ph'(t)/\ph(t)$ in \eqref{Req:Phi} gives
\[ \deriv[2]{\ph}{t}-\tfrac12t\deriv{\ph}{t}-\tfrac12(\la+1)\ph=0,\]
which has general solution
\[ \ph(t)=\left\{c_1D_{-\la-1}\big(-\tfrac12\sqrt2\,t\big)+c_2D_{-\la-1}\big(\tfrac12\sqrt2\,t\big)\right\}\exp(\tfrac18t^2),\]
with $c_1$ and $c_2$ arbitrary constants. Since only the ratio of $c_1$ and $c_2$ is relevant then we set $c_1=\cos\th$ and $c_2=\sin\th$.
\item The solution of the nonlinear discrete equation \eqref{Th48eqn} with initial conditions $\b_0=0$ and $\b_1$ given by \eqref{beta1uni} appears to depend on the sign of $t$.
In Figure \ref{fig:nbn} the points $(n,\b_n)$ are plotted for various choices of $t$. 
These show that $\b_n(t;\la)$ approaches a limit as $n\to\infty$ in different ways depending on whether $t<0$ or $t>0$.
If $t<0$ then the behaviour is similar irrespective of the value of $t$ and the plots suggest that $\{\b_{2n}\}_{n\in\N}$ and $\{\b_{2n+1}\}_{n\in\N}$ are both monotonically increasing sequences. However if $t>0$, the plots suggest that $\{\b_{2n}\}$ and $\{\b_{2n+1}\}$ are both monotonically increasing sequences for $n>n^*$, for some $n^*$ dependent on $t$.
The plots were generated in MAPLE using $250$ digits accuracy.

\def\figg#1{\includegraphics[width=5cm]{#1}}
\begin{figure}[ht]
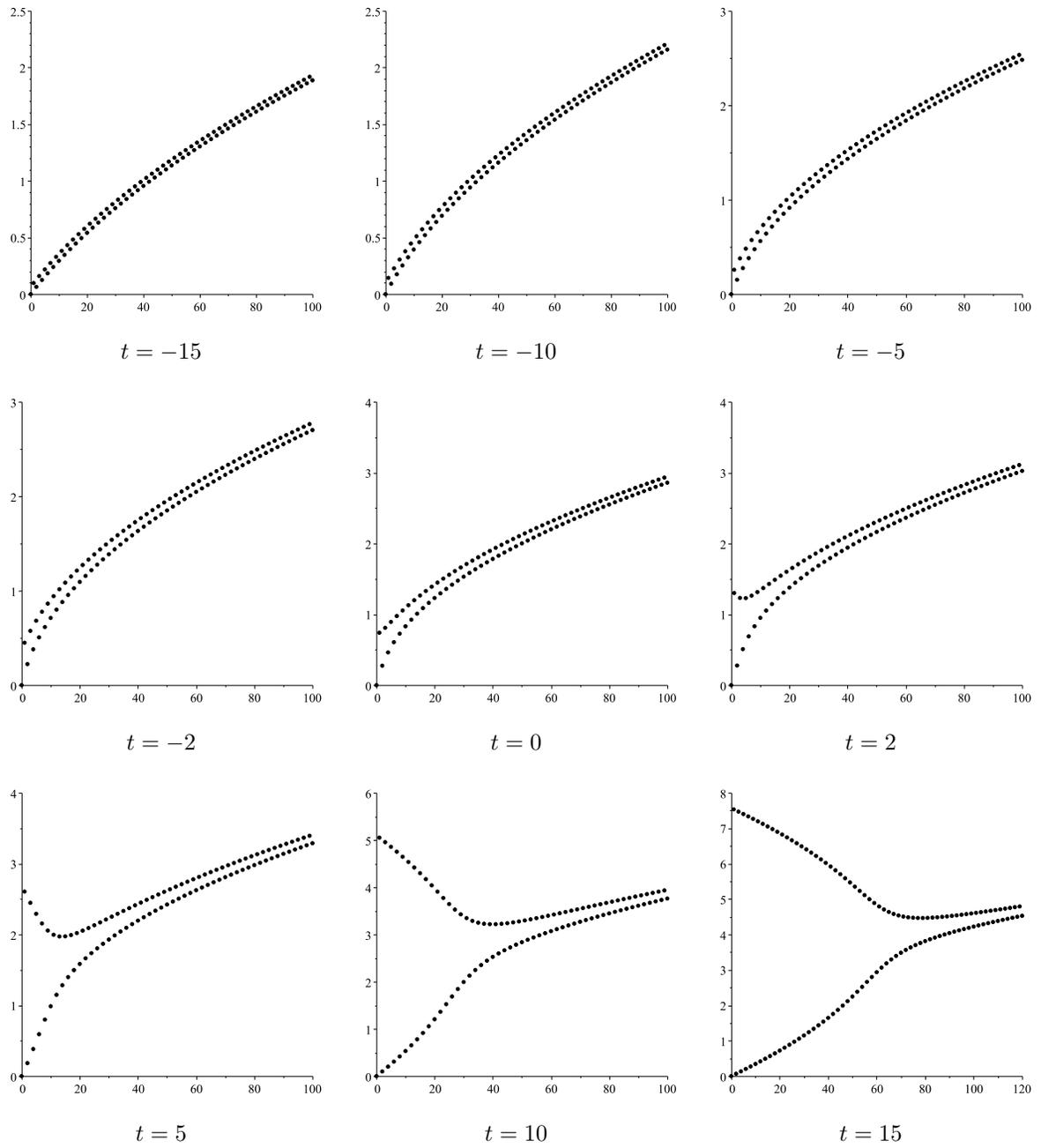

\[\begin{array}{c@{\quad}c@{\quad}c}
\figg{betanm15} & \figg{betanm10} & \figg{betanm5}\\
t=-15 & t=-10 & t=-5 \\[10pt]
\figg{betanm2} & \figg{betan0} & \figg{betan2} \\
t=-2 &t=0 & t=2 \\[10pt]
\figg{betan5} & \figg{betan10} & \figg{betan15}\\
t=5 & t=10 & t=15
\end{array}\]
\caption{\label{fig:nbn}Plots of the points $(n,\b_n)$ where $\b_n$ satisfies \eqref{Th48eqn} with initial conditions $\b_0=0$ and $\b_1$ given by \eqref{beta1uni}, for various choices of $t$, with $\la=\tfrac12$.}\end{figure}

\item The solution of the nonlinear discrete equation \eqref{Th48eqn} is highly sensitive to the initial conditions.
This is illustrated in Figure \ref{fig:bnsensitive} where the points $(n,\b_n)$ are plotted for the initial conditions
\[ b_0=0,\qquad \b_1=\Ph(t)+\ep,\]
where $\Ph(t)$ is given by \eqref{def:Phi}, and $\ep\in\{0,10^{-4},-10^{-4}\}$, for various choices of $t$. The plots clearly show that a small change in $\b_1$ gives rise to very different behaviour for $\b_n$.
The plots were also generated in MAPLE using $250$ digits accuracy.
\begin{figure}[ht]
\[\begin{array}{c@{\quad}c@{\quad}c}
\figg{betan0} & \figg{betan0a} & \figg{betan0b}\\
\b_1=\Ph(0) & \b_1=\Ph(0)+10^{-4}& \b_1=\Ph(0)-10^{-4}\\[10pt]
\figg{betan5} & \figg{betan5a} & \figg{betan5b}\\
\b_1=\Ph(5) & \b_1=\Ph(5)+10^{-4}& \b_1=\Ph(5)-10^{-4}\\[10pt]
\figg{betanm5} & \figg{betanm5a} & \figg{betanm5b}\\
\b_1=\Ph(-5) & \b_1=\Ph(-5)+10^{-4}& \b_1=\Ph(-5)-10^{-4}
\end{array}\]
\caption{\label{fig:bnsensitive}Plots of the points $(n,\b_n)$ where $\b_n$ satisfies \eqref{Th48eqn} with initial conditions $\b_0=0$ and $\b_1=\Ph(t)+\ep$, with $\Ph(t)$ given by \eqref{def:Phi}, and $\ep=0, \pm 10^{-4}$, for $t=0,\pm5$ and $\la=\tfrac12$. }\end{figure}
\end{enumerate}\end{remarks}

\section{Properties of the zeros of generalized Freud polynomials}\label{sec6}
In this section we begin by proving some properties for the zeros of semiclassical Laguerre polynomials (cf.~\cite{refCJ}) and then extend this to obtain analogous results for the zeros of monic generalized Freud polynomials, which arise from a symmetrization of semiclassical Laguerre polynomials (cf.~\cite{refCJK,refFvAZ}). For a discussion of the asymptotic behaviour as $n\to\infty$ for the recurrence coefficients and orthogonal polynomials with respect to the Laguerre-type weight
\[ w(x)=x^{\la}\exp \{-Q(x)\}, \qquad \la>-1,\quad x\in \R^{+},\]
with $Q(x)$ a polynomial with positive leading coefficient, see Vanlessen \cite{refVanlessen}.

As shown in \cite{refCJ}, the monic semiclassical Laguerre polynomials $L_n^{(\la)}(x;t)$, orthogonal with respect to the weight
\begin{equation}\label{lagw}
\ww{x} = x^{\la}\exp (-x^2 +tx), \qquad \la>-1,\quad x\in \R^{+}
\end{equation}
satisfy the three-term recurrence relation
\begin{equation}\label{lagrr}
L_{n+1}^{(\la)}(x;t)=[x-\widetilde{\a}_n(t)] L_{n}^{(\la)}(x;t)-\widetilde{\b}_{n}(t)L_{n-1}^{(\la)}(x;t)\end{equation} where
\begin{subequations}\label{eq:scLag31ab}\begin{align}\label{eq:scLag31a}
\widetilde{\a}_n(t)&=\tfrac12q_n(z)+\tfrac12t,\\ \label{eq:scLag31b}
\widetilde{\b}_n(t)&=-\tfrac18\deriv{q_n}z-\tfrac18 q_n^2(z) -\tfrac14zq_n(z)+\tfrac14\la+\tfrac12n,\end{align}\end{subequations}
with $z=\tfrac12t$. Here
\begin{equation*}q_n(z)=-2z+\deriv{}{z}\ln\frac{\Psi_{n+1,\la}(z)}{\Psi_{n,\la}(z)}\end{equation*}
satisfies \PIV\ (\ref{eq:PIV}) with parameters $(A,B)=(2n+\la+1,-2\la^2)$ and
\begin{equation*}
\Psi_{n,\la}(z)=\W\left(\psi_{\la},\deriv{\psi_{\la}}z,\ldots,\deriv[n-1]{\psi_{\la}}z\right),\qquad \Psi_{0,\la}(z)=1,
\end{equation*}
where
\begin{equation*}\label{eq:PT.SF.eq413}
\psi_{\la}(z)=\begin{cases} D_{-\la-1}\big(-\sqrt2\,z\big)\exp\big(\tfrac12z^2\big),\quad
&\mbox{\rm if}\quad \la\not\in\N,\\[2.5pt]
\ds \frac{\sqrt{\pi}}{m!\,2^{(m+1)/2}}\deriv[m]{}{z}\left\{\big[1+\erf(z)\big]\exp(z^2)\right\}, &\mbox{\rm if}\quad \la=m\in\N,
\end{cases}\end{equation*}
with $\erf(z)$ the error function, since the parabolic cylinder function $D_{-m-1}(z)$ is expressed in terms of error functions for $m\in\N$, cf.~\cite[\S12.7(ii)]{refNIST}.

\begin{theorem}\label{laguerrebounds}Let $L_n^{(\la)}(x;t)$ denote the monic semiclassical Laguerre polynomials orthogonal with respect to
\begin{equation*}
\ww{x} = x^{\la}\exp(-x^2 +tx),\quad x\in \R^{+}.
\end{equation*}

Then, for $\la>-1$ and $t\in \R$, the zeros $x_{1,n}<x_{2,n}<\dots<x_{n,n}$ of $L_n^{(\la)}(x;t)$
\begin{itemize}
\item[(i)] are real, distinct and \begin{equation}0<x_{1,n}<x_{1,n-1}<x_{2,n}<\dots<x_{n-1,n}<x_{n-1,n-1}<x_{n,n}\label{sep};\end{equation}
\item[(ii)] strictly increase with both $t$ and $\la$;
\item[(iii)] satisfy
\[a_n<x_{1,n}<\widetilde{\a}_{n-1}<x_{n,n}< b_n,\]
where
\begin{align*}
a_n=&\min_{1\leq k \leq n-1}\left\{\tfrac12(\widetilde{\a}_k+\widetilde{\a}_{k-1})-\tfrac12\sqrt{(\widetilde{\a}_k+\widetilde{\a}_{k-1})^2+4c_n\widetilde{\b}_k}\right\},\\
b_n=&\max_{1\leq k \leq n-1}\left\{\tfrac12(\widetilde{\a}_k+\widetilde{\a}_{k-1})+\tfrac12\sqrt{(\widetilde{\a}_k+\widetilde{\a}_{k-1})^2+4c_n\widetilde{\b}_k}\right\},
\end{align*}
with $c_n=4\cos^2\left(\frac{\pi}{n+1}\right)+\ep$, $\ep>0$, and
$\widetilde{\a}_n$ and $\widetilde{\b}_n$ given by \eqref{eq:scLag31ab}.
\end{itemize}
\end{theorem}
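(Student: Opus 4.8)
The plan is to treat the three parts with classical orthogonal-polynomial machinery, reserving the genuine work for the explicit bounds in (iii). Throughout I realise the zeros of $L_n^{(\la)}(x;t)$ as the eigenvalues of the symmetric tridiagonal Jacobi matrix $J_n$ having diagonal $(\widetilde{\a}_0,\dots,\widetilde{\a}_{n-1})$ and off-diagonal entries $\sqrt{\widetilde{\b}_1},\dots,\sqrt{\widetilde{\b}_{n-1}}$, where $\widetilde{\b}_k=h_k/h_{k-1}>0$ because $w(x)>0$ on $\R^+$.

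For (i), since $L_n^{(\la)}(x;t)$ is orthogonal with respect to the positive measure $w(x)\,dx$ on $\R^+$ and satisfies \eqref{lagrr} with $\widetilde{\b}_n(t)>0$, its zeros are real, simple, and lie in $(0,\infty)$; equivalently they are the distinct eigenvalues of $J_n$. The strict interlacing \eqref{sep} is the standard separation theorem for consecutive orthogonal polynomials, or equivalently Cauchy interlacing applied to the principal submatrix $J_{n-1}\subset J_n$, whose eigenvalues are the zeros of $L_{n-1}^{(\la)}$.

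For (ii), I would invoke Markov's monotonicity theorem: if a positive weight $w(x;\tau)$ depends smoothly on a parameter $\tau$ and $\partial_\tau\log w(x;\tau)$ is strictly increasing in $x$ on the support, then every zero $x_{k,n}(\tau)$ strictly increases with $\tau$. Here $\partial_t\log w=x$ and $\partial_\la\log w=\log x$, both strictly increasing on $(0,\infty)$, which yields the two monotonicity statements simultaneously.

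Part (iii) is the crux, and I would use the Wall--Chihara chain-sequence characterisation of the extreme zeros: for $X>\widetilde{\a}_k$, $0\le k\le n-1$, one has $X>x_{n,n}$ as soon as the finite sequence $\{\widetilde{\b}_k/[(X-\widetilde{\a}_{k-1})(X-\widetilde{\a}_k)]\}_{k=1}^{n-1}$ is a chain sequence. The arithmetic input is that the constant sequence of length $n-1$ with value $\tfrac14\sec^2(\pi/(n+1))=1/[4\cos^2(\pi/(n+1))]$ is the maximal constant chain sequence, so by the comparison theorem any sequence dominated by it is itself a chain sequence. Taking $X=b_n$, the defining maximum gives $X^2-(\widetilde{\a}_k+\widetilde{\a}_{k-1})X-c_n\widetilde{\b}_k\ge0$ for each $k$; since the zeros are positive we have $\widetilde{\a}_k>0$, whence $(X-\widetilde{\a}_{k-1})(X-\widetilde{\a}_k)=X(X-\widetilde{\a}_k-\widetilde{\a}_{k-1})+\widetilde{\a}_k\widetilde{\a}_{k-1}\ge c_n\widetilde{\b}_k$, so each term is at most $1/c_n<1/[4\cos^2(\pi/(n+1))]$. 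This is precisely where the form $(\widetilde{\a}_k+\widetilde{\a}_{k-1})^2$ under the radical and the slack $\ep>0$ in $c_n$ enter: the cross term $\widetilde{\a}_k\widetilde{\a}_{k-1}>0$ supplies the missing positivity, and $\ep$ forces the strict inequality $x_{n,n}<b_n$. The side condition $X>\widetilde{\a}_k$ also holds, since each term of the maximum exceeds $\widetilde{\a}_k+\widetilde{\a}_{k-1}$. The bound $a_n<x_{1,n}$ follows by applying the same argument to $-J_n$, whose squared off-diagonals are unchanged and whose diagonal products $\widetilde{\a}_k\widetilde{\a}_{k-1}$ remain positive, so that $\lambda_{\max}(-J_n)=-x_{1,n}$ is bounded as above; finally the middle inequalities $x_{1,n}<\widetilde{\a}_{n-1}<x_{n,n}$ are immediate from the Rayleigh quotient, the diagonal entry $\widetilde{\a}_{n-1}=e_n^{\top}J_ne_n$ lying strictly between $\lambda_{\min}(J_n)=x_{1,n}$ and $\lambda_{\max}(J_n)=x_{n,n}$. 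The main obstacle is this chain-sequence step: stating the criterion with the correct index range (length $n-1$), verifying the side conditions using $\widetilde{\a}_k>0$, and applying the comparison theorem at the sharp constant; parts (i) and (ii) are routine once the Jacobi-matrix and Markov frameworks are in place.
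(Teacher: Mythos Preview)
Your proposal is correct and follows essentially the same route as the paper: classical Szeg\H{o}-type arguments for (i), Markov's monotonicity theorem applied to $\partial_t\log w=x$ and $\partial_\la\log w=\log x$ for (ii), and the Ismail--Li/Wall--Wetzel chain-sequence bounds for the outer estimates in (iii). The only cosmetic difference is that for the inner bound $x_{1,n}<\widetilde{\a}_{n-1}<x_{n,n}$ the paper cites \cite[Cor.~2.2]{refDJ12} whereas you use the Rayleigh-quotient observation $\widetilde{\a}_{n-1}=e_n^{\top}J_ne_n$; these are equivalent, and otherwise you have simply spelled out the content of the references the paper invokes.
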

\begin{proof}\begin{itemize}\item[]
\item[(i)] The proofs for classical orthogonal polynomials, where $t=0$ (see, for example, \cite[Thm 3.3.1 and 3.3.2]{refSzego}), work without change.
\item[(ii)] Since for the semiclassical Laguerre weight \eqref{lagw}
\[\ds{\frac{\partial}{\partial \la}\ln \ww{x}=\ln x}\] is an increasing function of $x$, it follows from Markov's monotonicity theorem (cf.~\cite[Theorem 6.12.1]{refSzego} that the zeros of $L_n^{(\la)}(x;t)$ increase as $\la$ increases.
Similarly, since \[\ds{\frac{\partial}{\partial t}\ln \ww{x}=x},\] increases with $x$, it follows that the zeros of $L_n^{(\la)}(x;t)$ increase as $t$ increases.
\item[(iii)]
The inner bound $\widetilde{\a}_{n-1}$ for the extreme zeros follows from \cite[Cor. 2.2]{refDJ12} together with \eqref{lagrr} and \eqref{sep} since $\widetilde{\b}_{n-1}(t)$ does not depend on $x$. The outer bounds $a_n$ and $b_n$ for the extreme zeros $x_{1,n}$ and $x_{n,n}$ respectively, follow from the approach based on the Wall-Wetzel Theorem, introduced by Ismail and Li \cite{refIsmailLi} (see also \cite{refIsmail}) by applying their Theorems 2 and 3 to the three term recurrence relation (\ref{lagrr}).
\end{itemize}
\end{proof}

 Asymptotic properties of the extreme zeros of generalized Freud polynomials related to the weight \eqref{conj} were studied by Freud \cite{refFreud1986} and Nevai \cite{refNevai1986}. Subsequently, Kasuga and Sakai \cite{refKasugaSakai} extended and generalized these results.

Next we prove some properties of zeros of generalized Freud polynomials associated with the weight \eqref{genFreud}. The weight \eqref{genFreud} is even and it is well-known that, in this case, the zeros of the corresponding orthogonal polynomials are symmetric about the origin. This implies that the positive and the negative zeros have opposing monotonicity and, as a result of this symmetry, it suffices to study the monotonicity and bounds of the positive zeros. 
\begin{theorem}\label{zerofreud}Let $S_n(x;t)$ be the monic generalized Freud polynomials orthogonal with respect to the weight \eqref{genFreud}, i.e.
\begin{equation*}
\ww{x}=|x|^{2\la+1} \exp(-x^4+tx^2),\end{equation*}
and let $x_{n,1}(\la,t)<x_{n,2}(\la,t)<\dots<x_{n,\lfloor n/2\rfloor }(\la,t)$ denote the positive zeros of $S_n(x;t)$ (recall $\lfloor m\rfloor $ is the largest integer smaller than $m$). Then, for $\la>-1$ and $t\in \R$
 \begin{itemize}
 \item[(i)]
 the zeros of $S_n(x;t)$ are real and distinct and
 \[\begin{split}0<x_{n,1}(\la,t)&<x_{n-1,1}(\la,t)<x_{n,2}(\la,t)<\dots 
 <x_{n-1,\left[(n-1)/2\right]}(\la,t)<x_{n,\left[ n/2\right]}(\la,t);\end{split}\]
 \item[(ii)] the $\nu$th zero $x_{n,\nu}(\la,t)$, for a fixed value of $\nu$, is an increasing function of both $\la$ and $t$;
 \item[(iii)] the largest zero satisfies the inequality
\begin{equation}x_{n,\left[{n}/{2}\right]}(\la,t)<\max_{1\leq k\leq n-1}\sqrt{c_n\b_k(t;\la)},\label{lzb}\end{equation}
where $c_n=4\cos^2\left(\frac{\pi}{n+1}\right)+\ep$, $\ep>0$, and
$\b_n(t;\la)$ is given by \eqref{betathreeAAab}.
\end{itemize}
\end{theorem}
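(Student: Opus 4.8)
The plan is to exploit the symmetrisation that links the generalised Freud polynomials to the semiclassical Laguerre polynomials of Theorem~\ref{laguerrebounds}. Under the quadratic substitution $u=x^2$ the even weight \eqref{genFreud} is carried to the Laguerre-type weight $u^{\la}\exp(-u^2+tu)$ on $\R^{+}$, and the standard symmetrisation (cf.~\cite{refCJK,refFvAZ}) gives the monic identities
\[ S_{2n}(x;t)=L_n^{(\la)}(x^2;t),\qquad S_{2n+1}(x;t)=x\,L_n^{(\la+1)}(x^2;t),\]
so that the positive zeros of $S_n(x;t)$ are precisely the positive square roots of the zeros of the associated Laguerre polynomial. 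Since $u\mapsto\sqrt{u}$ is a strictly increasing bijection of $(0,\infty)$, the assertion in part~(i) that the zeros of $S_n(x;t)$ are real and distinct is immediate from Theorem~\ref{laguerrebounds}(i).

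For the interlacing in part~(i) I would not route through the Laguerre families (whose parameters differ by one across consecutive degrees) but instead use the classical fact that consecutive members of an orthogonal polynomial sequence have strictly interlacing zeros, applied to $S_n$ and $S_{n-1}$, which are orthogonal with respect to the positive weight \eqref{genFreud} (cf.~\cite[Thm.~3.3.2]{refSzego}). Because the weight is even, the full zero set of $S_n$ is symmetric about the origin, with a zero at the origin exactly when $n$ is odd; intersecting the interlacing chain of all zeros of $S_n$ and $S_{n-1}$ with the half-line $(0,\infty)$ then yields the stated chain $0<x_{n,1}<x_{n-1,1}<x_{n,2}<\dots$.

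For the monotonicity in part~(ii) the natural direct route via Markov's theorem is unavailable, since $\partial_{\la}\ln w_{\la}=2\ln|x|$ and $\partial_t\ln w_{\la}=x^2$ fail to be monotonic on all of $\R$. The symmetrisation repairs this: the positive zeros of $S_n(x;t)$ are $\sqrt{u_{k,n}}$, where $u_{k,n}$ runs over the zeros of $L_n^{(\la)}$ (for even degree) or $L_n^{(\la+1)}$ (for odd degree). By Theorem~\ref{laguerrebounds}(ii) each $u_{k,n}$ increases with $t$ and with its Laguerre parameter; as both $\la\mapsto\la$ and $\la\mapsto\la+1$ are increasing and $u\mapsto\sqrt u$ is increasing, every positive zero $x_{n,\nu}(\la,t)$ increases with $t$ and with $\la$.

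Finally, for part~(iii) I would apply the Wall--Wetzel/Ismail--Li bound (\cite[Thms.~2 and~3]{refIsmailLi}) directly to the symmetric three-term recurrence \eqref{eq:gfrr}, exactly as in the proof of Theorem~\ref{laguerrebounds}(iii) but now with vanishing diagonal. The associated Jacobi matrix has zero diagonal and off-diagonal entries $\sqrt{\b_k}$, so $x_{n,\lfloor n/2\rfloor}$---the largest zero of $S_n$---is its largest eigenvalue; the outer bound then takes the form obtained from $b_n$ on setting the diagonal terms to zero, namely $\max_{1\le k\le n-1}\sqrt{c_n\b_k}$, which is \eqref{lzb}. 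The main obstacle I anticipate is the careful bookkeeping underlying part~(ii): establishing the symmetrisation identities with the correct monic normalisations and, in particular, tracking the parameter shift $\la\mapsto\la+1$ in the odd-degree case, so that the monotonicity in $\la$ is transferred cleanly for both parities.
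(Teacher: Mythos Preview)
Your proposal is correct and largely parallels the paper's argument, with one genuine difference worth noting.

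For parts~(i) and~(iii) you and the paper do essentially the same thing: (iii) is exactly the paper's argument (Ismail--Li applied to \eqref{eq:gfrr}), and for (i) the paper routes the statement through Theorem~\ref{laguerrebounds}(i) via the symmetrisation, whereas you apply Szeg\H{o}'s interlacing directly to $S_n$ and $S_{n-1}$. Your route is in fact cleaner, since---as you observe---consecutive $S_n$ and $S_{n-1}$ correspond under symmetrisation to Laguerre families with parameters differing by one, so Theorem~\ref{laguerrebounds}(i) alone does not literally give the cross-parity interlacing; the direct appeal to the general theory for $\{S_n\}$ avoids this bookkeeping.

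The real divergence is in part~(ii). You argue that Markov's theorem cannot be applied directly because $\partial_{\la}\ln w_{\la}=2\ln|x|$ and $\partial_t\ln w_{\la}=x^2$ are not monotone on all of~$\R$, and therefore pass through the symmetrisation and invoke Theorem~\ref{laguerrebounds}(ii) on $(0,\infty)$. The paper instead appeals to \cite[Theorem~2.1]{refJWZ}, a Markov-type monotonicity result tailored to \emph{even} weights, for which one only needs $2\ln x$ and $x^2$ to be increasing on $(0,\infty)$; this gives~(ii) in one stroke without the symmetrisation. Your detour is perfectly valid and has the advantage of being self-contained within the tools already assembled in the paper, while the paper's approach is shorter but imports an additional reference. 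Either way the parameter shift $\la\mapsto\la+1$ in the odd case, which you flag as the main obstacle, causes no difficulty: monotonicity of the Laguerre zeros in $\la$ transfers to monotonicity in $\la$ of the shifted family as well.
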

\begin{proof}
\begin{itemize}\item[]
\item[(i)]
This follows from Theorem~\ref{laguerrebounds}(i) using the relation (cf.~\cite{refChihara,refCJK,refFvAZ})
\begin{align*}
S_{2n}(x;t) =& L_n^{(\la)}(x^2;t),\qquad
S_{2n+1}(x;t) = xL_n^{(\la+1)}(x^2;t).\end{align*}
between the semiclassical Laguerre polynomials $L_n^{(\la)}(x;t)$ and the generalized Freud polynomials $S_n(x;t)$.
\item[(ii)]
The monotonicity of the positive zeros with respect to the parameters $\la$ and $t$ follows from \cite[Theorem 2.1]{refJWZ} since for the generalized Freud weight \eqref{genFreud}
\[\ds{\frac{\partial}{\partial \la}\ln \ww{x}=2\ln x},\quad\ds{\frac{\partial}{\partial t}\ln \ww{x}=x^2},\]
both increase with $x$.
\item[(iii)] The inequality (\ref{lzb}) for the largest zero $x_{n,\lfloor n/2\rfloor }(\la,t)$ follows by applying \cite[Theorem 2 and 3]{refIsmailLi} to the three term recurrence relation \eqref{eq:gfrr}.
\end{itemize}
\end{proof}


\section{Conclusion}
In this paper we have analysed the asymptotic behaviour of generalized Freud polynomials, orthogonal with respect to the generalized Freud weight \eqref{genFreud}, in two different contexts. Firstly, we obtained asymptotic results for the polynomials when the parameter $t$ involved in the semiclassical perturbation of the weight tends to $\pm \infty$. Next, we considered the strong asymptotics of the coefficients $\b_n$ in the three-term recurrence relation \eqref{eq:gfrr} satisfied by the generalized Freud polynomials $S_n(x;t)$ as the degree $n$ tends to infinity and investigated the asymptotic behaviour of the polynomials themselves as the degree increases. We showed that unique, positive solutions of the nonlinear discrete equation \eqref{nonlineardiffer} satisfied by the recurrence coefficients exist for all $t\in\R$ but that these solutions are very sensitive to the initial conditions. We also proved various properties of the zeros of generalized Freud polynomials. The closed form expressions for the recurrence coefficients obtained in \cite{refCJK} allowed the investigation of the properties of generalized Freud polynomials in this paper. A natural extension of this work would be an investigation of asymptotic properties using limiting relations satisfied by the polynomials as the parameter $\la$ tends to $\infty$.

\section*{Acknowledgments}
We thank the African Institute for Mathematical Sciences, Muizenberg, South Africa, for their hospitality during our visit when some of this research was done, supported by their ``Research in Pairs" programme. We also thank the referees for helpful comments and additional references.
PAC thanks Alfredo Dea\~no, Alexander Its, Ana Loureiro and Walter Van Assche for their helpful comments and illuminating discussions and also the Department of Mathematics, National Taiwan University, Taipei, Taiwan, for their hospitality during his visit where some of this paper was written.
KJ thanks Abey Kelil for pointing out several useful references and insightful discussions.
The research by KJ was partially supported by the National Research Foundation of South Africa under grant number 103872.

\def\p{Painlev\'{e}}
\def\JPA{J. Phys. A}
\def\bibitm{\vspace{-5pt}\bibitem}

\def\refjl#1#2#3#4#5#6#7{
\bibitm{#1}\textrm{\frenchspacing#2}, \textrm{#3},
\textit{\frenchspacing#4}, \textbf{#5}\ (#7)\ #6.}

\def\refjltoap#1#2#3#4#5#6#7{
\bibitm{#1} \textrm{\frenchspacing#2}, \textrm{#3},
\textit{\frenchspacing#4}, 
#6.}

\def\refbk#1#2#3#4#5{
\bibitm{#1} \textrm{\frenchspacing#2}, \textit{#3}, #4, #5.}

\def\refcf#1#2#3#4#5#6#7{
\bibitm{#1} \textrm{\frenchspacing#2}, \textrm{#3},
in: \textit{#4}, {\frenchspacing#5}, #6, pp.\ #7.}

 \end{document}